\newtheorem{proposition}{Proposition}
\begin{document}
	
	\title{Fresh Multiple Access: A Unified Framework Based on Large Models and Mean-Field Approximations}

	\author{Haiming~Hui, Shuqi~Wei, Wei~Chen,~\IEEEmembership{Senior Member,~IEEE}%
		\thanks{Haiming~Hui, Shuqi~Wei and Wei~Chen are with the Department of Electronic Engineering and Beijing National Research Center for Information Science and Technology, Tsinghua University, Beijing 100084, China (e-mail: hhm18@mails.tsinghua.edu.cn; weishuqi77@gmail.com; wchen@tsinghua.edu.cn).}%
		\thanks{This work is supported by the National key research and development program
			under Grant 2018YFA0701601, the National Science Foundation of China under Grant No. 61971264, and the National Natural Science Foundation of China/Research Grants Council Collaborative Research Scheme under Grant No. 62261160390.}
	}
	
	
	\maketitle

	\begin{abstract}
		Information freshness has attracted increasingly attention in the past decade as it plays a critical role in the emerging real-time applications. Age of information (AoI) holds the promise of effectively characterizing the information freshness, hence widely considered as a fundamental performance metric. However, in multiple-device scenarios, most existing works focus on the analysis and optimization of AoI based on queueing systems. The study for a unified approach for general multiple access control scheme in freshness-oriented scenarios remains open. In this paper, we take into consideration the combination of the fundamental freshness metric AoI and multiple access control schemes to achieve efficient cross-layer analysis and optimization in freshness-oriented scenarios, which is referred to as fresh multiple access. To this end, we build a unified framework with a discrete-time tandem queue model for fresh multiple access. The unified framework enables the analysis and optimization for general multiple access protocols in fresh multiple access. To handle the high dimension framework embedded in fresh multiple access, we introduce large model approaches for the Markov chain formulation in AoI oriented scenarios. Two typical AoI-based metric are studied including age of incorrect information (AoII) and peak AoII. Moreover, to address the computational complexity of the large model, we present mean-field approximations which significantly reduces the dimension of the Markov chain model by approximating the integral affect of massive devices in fresh multiple access. 
	\end{abstract}
	
	\begin{IEEEkeywords}
		Information freshness, age of incorrect information, multiple access, unified framework, access control, random access, reservation, queueing model, Markov chain, mean-field  approximations.
	\end{IEEEkeywords}
	
	\section{Introduction}
	
	
	The rapid development of 5G and 6G communications has promoted a vast range of real-time applications including industrial internet of things (IIoT), autonomous driving, and unmanned aerial vehicles \cite{roadmap_6G}. In these real-time applications, information freshness plays an increasing role and attracts much research interests. To effectively characterize the information freshness, age of information (AoI) was proposed in \cite{kaul2012real}, which has been widely considered as a fundamental metric. The AoI for single device has been extensively studied in the past decade \cite{kosta2021age, yates2021age}. 
	
	\begin{figure*}[!t]
		\centering
		\includegraphics[width=0.8\linewidth]{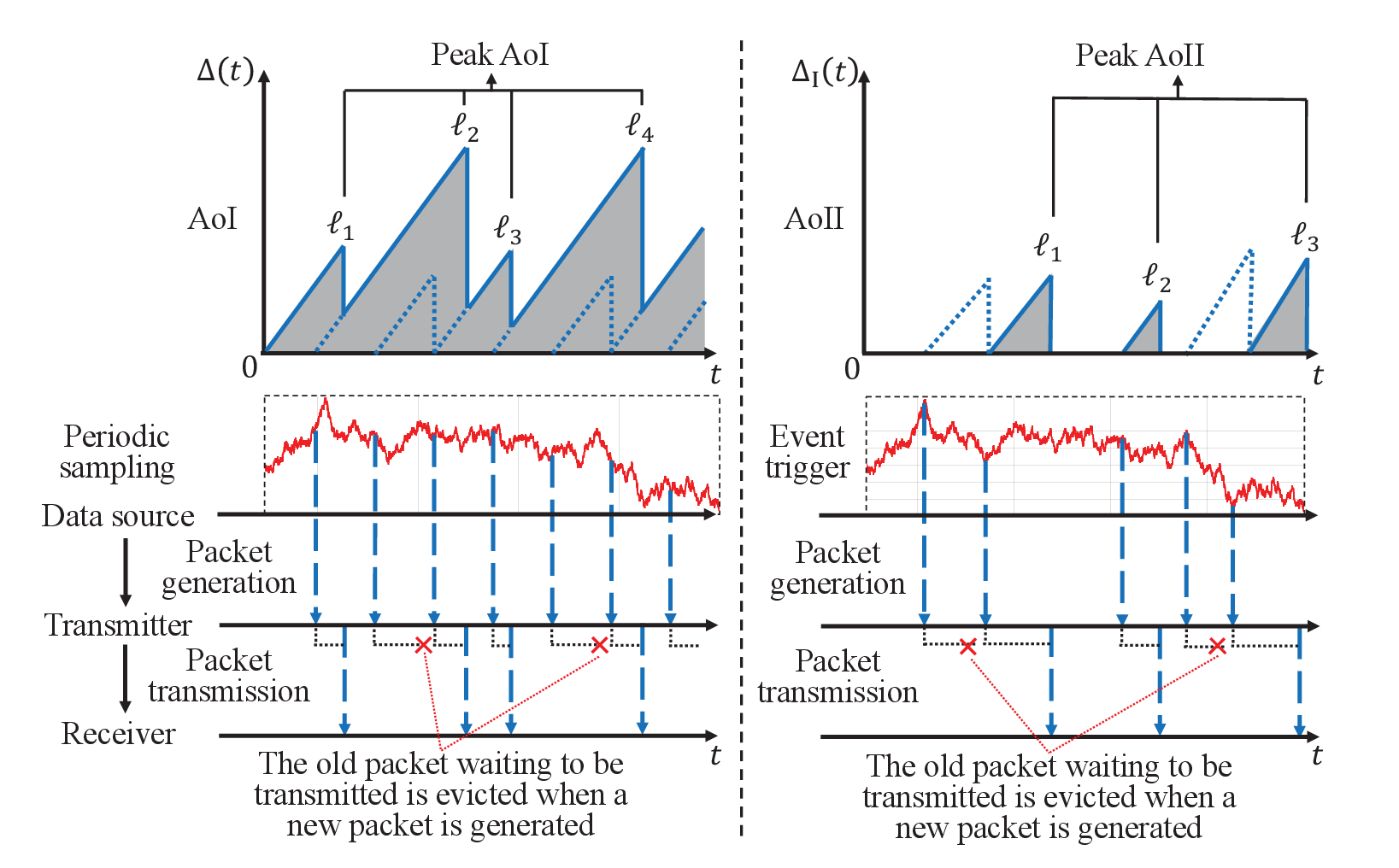}\\
		\caption{The evolution of AoI and AoII over time with periodic sampling and event trigger scheme.}
		\label{evolution_aoii}
	\end{figure*}
	In addition to AoI, extensive connectivity for massive devices are demanded to support the real-time applications. Efficient multiple access control scheme is expected to play a fundamental role in addressing the stringent requirement of connectivity with limited communication resources. However, conventional multiple access techniques can only be applied with a relatively small number of devices \cite{conventional_survey_2, conventional_access}. To handle the access control of massive devices, random access schemes were developed as a distributed control protocol, which allows dynamic resource allocation for devices in a contention manner, hence attracting much research interest in massive access control \cite{ephremides1989control, massive_access_future}. The earliest random access algorithm called Aloha was first developed in \cite{aloha_original}. The Aloha algorithm is readily to implement because of its simple rules. Extensive analysis for Aloha algorithm were studied concerning its throughput and stability regions \cite{luo1999stability, luo2006throughput}. Furthermore, the tree splitting algorithm \cite{splitting_1} and first-come-first-serve (FCFS) splitting algorithm \cite{splitting_2} with more sophisticated collision resolution rules were proposed to achieve higher throughput and lower delay than the simple Aloha algorithm. 
	
	In conventional metrics for information freshness, the state change or update of the monitored process itself is not taken into consideration. To overcome this limitation, a novel metric for the age of information is defined to be the time difference between the state change and remote update, for both monitor processes with alarms \cite{addterm} and counting processes \cite{wangmeng}.To improve freshness in multi-user systems, AoI was considered in conventional multiple access schemes. However, the analysis and optimization of multiple access schemes in AoI oriented scenarios are quite challenging since the system is affected by the collective influence of multiple users. Most existing works focused on the characteristic of queueing systems in the analysis or optimization of AoI. Two typical sampling schemes are shown in Fig. \ref{evolution_aoii}. With periodic sampling or Poisson arrivals of data packets, old data packets are evicted to improve freshness. In this case, the scheduling policies based on queueing systems were studied for the analysis and optimization of average AoI \cite{kam2018age}. The NOMA scheme and orthogonal multiple access (OMA) schemes are studied to reduce the average AoI \cite{maatouk2019minimizing}. The tradeoff between the average AoI and drop rate for two-user multi-access scenario was revealed in \cite{fountoulakis2022information}. Furthermore, to address the access control for massive devices, random access based access protocols were applied to reduce the signaling overhead. The collision resolution algorithm for slotted Aloha scheme \cite{yates2017status} and pure Aloha scheme \cite{yates2020age} were further optimized to reduce the average AoI. Threshold-based access policy were investigated to reduce the average AoI by dynamically adjusting the access frequency \cite{chen2022age}. When the instability embedded in fading channels is considered, a rate-adaptive transmission scheme was studied to minimize the average AoI under an average power constraint \cite{wang2021adaptive}. Since information content of data packets can be further exploited to achieve fresh applications, event-trigger scheme was considered to capture more informative data packets from the data source. All data packets are transmitted without eviction. In this case, timing side information was utilized in compression for the time stamp of data packets to evaluate AoI \cite{yu2023real}. Compression distortion and AoI are jointly optimized in real-time monitoring to minimize the reconstruction distortion \cite{hui2022real}. The AoI was reduced under the sampling scheme based on the scheduling policy \cite{kadota2019scheduling}.
	 A new AoI-based metric called age of incorrect information (AoII) was proposed as a new performance metric through event trigger scheme \cite{maatouk2020age}. Through event trigger schemes, AoII was shown to be able to capture more meaningfully the purpose of data, thus attracting much attention in semantic communications \cite{maatouk2023age}. The scheduling policy for multiple sensors in slotted Aloha systems was studied to minimize the AoII \cite{nayak2023decentralized}.

	In addition, peak AoI plays an important role in freshness oriented multiple access since the peak AoI characterizes the staleness of the transmitted information \cite{costa2014age}, as illustrated in Fig. \ref{evolution_aoii}. It is shown that peak AoII depends on the access delay \cite{kosta2020cost}. Multiple access schemes can be further optimized to reduce the peak AoII or access delay. A cross-layer approach with NOMA was studied to minimize the average delay \cite{noma_delay_sensitive}, which is further generalized into scenarios with arbitrary packet arrivals and adaptive transmission \cite{liu2022joint}. A mean-field approximation approach was adopted aiming at the analysis of delay-optimal scheduling \cite{mean_field}. Slotted uncoordinated random access schemes were developed to serve a massive number of devices with quality-of-service requirements guaranteed \cite{connectionless_access}. Polling schemes allow the central device to ask each device in sequence to conduct data transmission \cite{polling_book}. A device only consumes a short timeslot if the device has no data packets to transmit. Performance analysis of polling schemes were studied based on queueing models \cite{polling_analysis_queuing}. The scheduling policy under polling scheme were studied to minimize the average AoI with stochastic packet generation model \cite{kosta2019age}. Due to the high complexity in modeling the multiple access scheme for massive devices in AoI oriented scenarios, the analysis and optimization for multiple access scheme with more devices under various AoI-based performance metrics remains open. 
	
	Furthermore, reservation-based random access schemes hold the promise of addressing massive access control with limited resources. Particularly, the throughput of the Aloha and FCFS algorithm are around 0.368 and 0.487. The upper bound of the throughput is 0.568 \cite{upperbound_throughput}, which means considerable resources are inevitably wasted from the collision. While the maximum throughput is limited due to the inevitable collision in the random access process \cite{throughput_survey}, access protocols based on reservation techniques were investigated to approach the throughput of one \cite{wieselthier1988distributed}. In reservation-based schemes, each device needs to make a reservation prior to its data transmission. A basic reservation-based multiple access scheme through satellite network was proposed in \cite{AE_satellite}, in which each minislot is allocated to a fixed node to make reservation. To extend the connectivity, random access techniques are adopted in the reservation procedure. Thus, all nodes were allowed to make reservation in any given minislot in a contention-based manner \cite{capacity_reservation}. The reservation-based random access\footnote{The reservation-based random access is also referred to as the connection-based random access in the literature, since the reservation process can be regarded as setting up the connection.} scheme may significantly improve the throughput by sending short reservation signals to reserve resources for collision-free packet transmission \cite{reservation_connection_based}.

	The main contents of this paper are illustrated in Fig. \ref{content_structure}. In this paper, we build a unified framework to provide a general approach to analysis, optimization, and comparison of all these multiple access schemes in freshness-oriented scenarios, which is referred to as fresh multiple access. The general multiple access scheme is characterized by three consecutive stages. First, in the access trigger stage, new reservation is triggered at a user based on the threshold of the local buffer. Second, in the reservation stage, the user sends reservation signals in a contention or polling based manner. Third, in the transmission stage, the user transmits the data packets based on its reservation in a contention-free manner. Reservation-free multiple access schemes are also unified in this framework by considering zero service time for the transmission stage. In this case, the data packets to be transmitted is regarded as the reservation signal. Throughout the access procedure, a tandem queue structure including a virtual reservation queue and a virtual transmission queue are built in the protocol layer to characterize the users in each stages.
	
	Based on the unified framework, general multiple access scheme is modeled with Markov chain formulation. However, when the freshness metric is incorporated in the Markov chain model, the dimension of the Markov chain model for fresh multiple access becomes prohibitively high. To this end, we present large model based approaches for analysis and optimization of fresh multiple access. To formulate the large model Markov chain for fresh multiple access, we obtain the sparse transition probabilities for each state based on the characterization of multiple access protocols and arbitrary packet generation process. Thus, the whole transition matrix of the large model Markov chain can be obtained. Further analysis and optimization can be conducted based on the large model Markov chain. Note that arbitrary mechanisms can be adopted for the reservation stage, hence enabling further optimization towards the multiple access scheme. We focus on typical reservation schemes including polling, slotted Aloha, and tree splitting algorithm in the reservation stage to demonstrate the unified framework in this paper. Moreover, to address the high computational complexity of the large model for massive devices oriented scenarios, we present mean-field approximations for the performance analysis of fresh multiple access. Through mean-field  approximation, the integral affect of massive devices in fresh multiple access are approximated as an environment affect. Thus the dimension of Markov chain is significantly reduced to comprise only the local state of a single user. 
	
 	 Based on the Markov chain model, we are able to analyze the TD, FD, and XD multiplexing schemes for either AoII or peak AoII. Specifically, for AoII oriented scenarios, we formulate large Markov model based on individual states of all devices in the system. For peak-AoII oriented scenarios, we only need the total number of data packets to obtain the average peak AoII. Thus, we formulate reduced-dimensional Markov model based on integral states of the whole system.
	The main contributions in this paper are listed as follows.
	\begin{itemize}
		\item[1)] We study two typical fresh multiple access scenarios oriented by AoII and peak AoII based on the unified framework. 
		\item[2)]  We formulate the large model Markov chain to characterize various multiple access protocols and multiplexing schemes.	Three multiplexing schemes of reservation signals and data packets are considered, including multiplexing in the time domain (TD) \cite{conference_TD}, frequency domain (FD), and dynamic bandwidth allocation scheme (XD).
		\item [3)]In massive devices oriented scenarios, we present mean-field approximations to analyze the AoII and peak AoII, which simplifies the Markov chain formulation and reduces the computational complexity of large model. Thus, we formulate a small model Markov chain based on the local state of a single user with mean-field approximation. Based on the Markov chain, we can compute the steady-state probabilities which leads to the AoII or peak AoII performance metric.
		\item [4)]  Markov decision process (MDP) is applied to optimize the dynamic scheduling policy for the XD scheme. We present extensive numerical results to demonstrate the analysis for AoII and peak AoII with various access protocols based on the three kinds of Markov chain formulation. 
	\end{itemize}
	
	\begin{figure}[!t]
		\centering
		\includegraphics[width=1\linewidth]{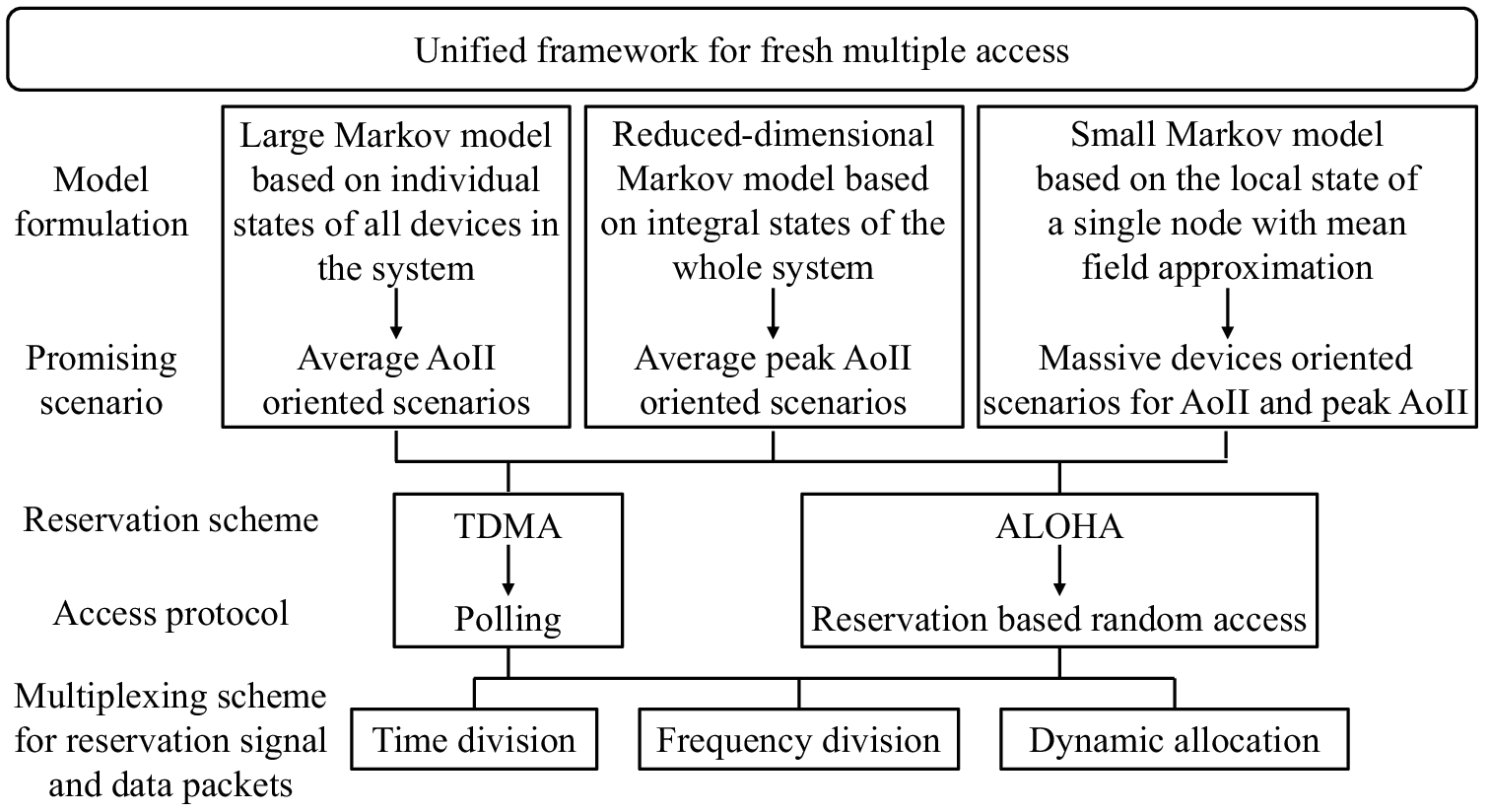}\\
		\caption{The illustration of the main contents of this paper based on the unified framework for multiple access.}
		\label{content_structure}
	\end{figure}

	The rest of this paper is organized as follows. The system model and the unified framework are described in Section II. The Markov chain formulation for AoII oriented scenarios and peak-AoII oriented scenarios are presented in Sections III and IV, respectively. Markov chain formulation with mean-field  approximation for massive devices oriented scenario is presented in Section V. Numerical results are presented in Section VI. Finally, we present our conclusions in Section VII.
	
	\section{Related Work}
	With reservation-based random access schemes, the freshness was further improved in scenarios with massive devices. Delay and stability of the reservation-based slotted Aloha scheme were analyzed in \cite{connection_slotted_aloha} based on the queueing model. Two typical reservation-based random access schemes are applied in the 802.11 protocol and the long term evolution (LTE) cellular networks. The primary access protocol of 802.11 is based on the carrier sense multiple access with collision avoidance (CSMA/CA) scheme, which defines request-to-send/clear-to-send (RTS/CTS) signals as the reservation signal. An analytical model to compute the throughput of CSMA schemes was presented by Bianchi in \cite{Bianchi_CSMA_80211}. More detailed analysis for particular p-persistent and non-persistent CSMA were studied in \cite{CSMA_p_persistent, CSMA_non_persistent}. The access delay of CSMA with unsaturated networks was investigated in \cite{queue_delay_80211} based on the queue modeling for nodes in the network. The scheduling policy based on the CSMA scheme can be optimized to reduce AoI in scenarios with power constraints \cite{bedewy2021low}. The average AoI with CSMA scheme was optimized in both the sampling scenarios and stochastic arrivals scenarios \cite{maatouk2020ageCSMA}. Moreover, in the LTE networks, a physical random access channel, which appears periodically in time frames, is used to transmit the reservation signal referred to as preamble. The throughput and access delay of machine-to-machine (M2M) communications in LTE networks were optimized by tuning parameters of the inherent Aloha scheme \cite{Dai_modeling_throughput, Dai_access_delay}. The resource consumption and throughput of random access were studied to obtain Pareto-optimal configuration \cite{random_access_consumption}. The double-queue model presented in \cite{Dai_modeling_throughput} does not consider the resource consumption of the packet transmission in its second queue. However, by taking into consideration the both the reservation and the packet transmission, we can obtain a more comprehensive model with unified analysis and optimization framework for freshness-oriented multiple access.	
	
	\begin{figure}[!t]
		\centering
		\includegraphics[width=0.7\linewidth]{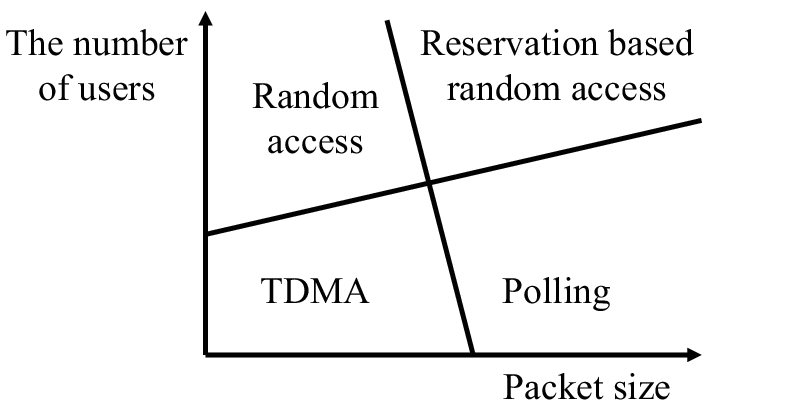}\\
		\caption{The suitable multiple access schemes under different cases of the number of devices and packet size.}
		\label{fig_four_phase}
	\end{figure}
	
	For the various widely used multiple access techniques, they are categorized into four types from the perspective of contention and reservation, as shown in Fig. \ref{fig_four_phase}, including time division multiple access (TDMA), polling scheme, random access (such as Aloha), and reservation based random access (such as CSMA/CA and LTE networks). It is shown that each scheme is shown to be superior in particular scenarios of different connectivity and packet size. Specifically, on one hand, TDMA and polling are contention-free schemes in which all users are allocated resources by the BS to access the channel in a contention-free manner. Random access and reservation based random access are contention-based schemes in which the BS does not guarantee resource allocation for each device. All devices attempt to access the channel in a contention-based manner. The contention-based manner is more suitable for scenarios with massive devices with sporadic packet generations. Thus, as the number of devices increase, the contention-based manner is shown to outperforms contention-free manner. On the other hand, TDMA and random access are reservation-free schemes, which means that the data packets are transmitted directly when the device access the channel. Polling and reservation based random access are reservation based schemes, which means that the device should send a reservation signal prior to its packet transmission. The reservation signal is transmitted based on TDMA or random access schemes in Polling or reservation based random access, respectively. The reservation mechanism alleviates the waste of resources from idle and collision. Thus, the reservation-based scheme is shown to outperform the reservation-free scheme in scenarios with large packet size.
	\section{System Model}
	
	We consider a multi-user system, of which our goal is to optimize the information freshness. To characterize the information freshness of multiple users, we consider the AoI metric. There are $M$ users and a receiver in the system. Each user is regarded as a transmitter node, which is equipped with an infinite buffer to store data packets to be transmitted. Data packets of status updates are generated at each node and transmitted to the receiver. With periodic sampling schemes, the interval of status updates at each node is the same. Thus, the TDMA scheme can be applied, in which each node generates the data packet at the beginning of the timeslot allocated to the node for transmission. 
	
	However, periodic sampling may not capture the fresh informative updates efficiently. To address that, we consider incorrect information with event trigger scheme for the multiple access network model, as shown in Fig. \ref{evolution_aoii}. Specifically, with event trigger scheme, each node only generates a data packet of status update when the status changes, which leads to the AoII metric. The node does not transmit any data packets if the status remains the same, hence reducing the resource consumption. To study access control towards incorrect information, we build a unified framework including two virtual queues referred to as the reservation queue and the transmission queue, as shown in Fig. \ref{system_model}. The unified framework can be adopted for general multiple access schemes. We assume that the packet generation at each node is a Poisson process with the same arrival rate $\bar{\lambda} = \frac{\lambda}{M}$. The packet arrival for the whole network is also Poisson with rate $\lambda$. Specifically, the probability that there are $i$ packets arriving at all nodes in a unit of time is given by
	\begin{align}
		a_{i} = \frac{(\lambda)^i}{i!}e^{\lambda}.
	\end{align}

	\begin{figure*}[!t]
		\centering
		\includegraphics[width=1\linewidth]{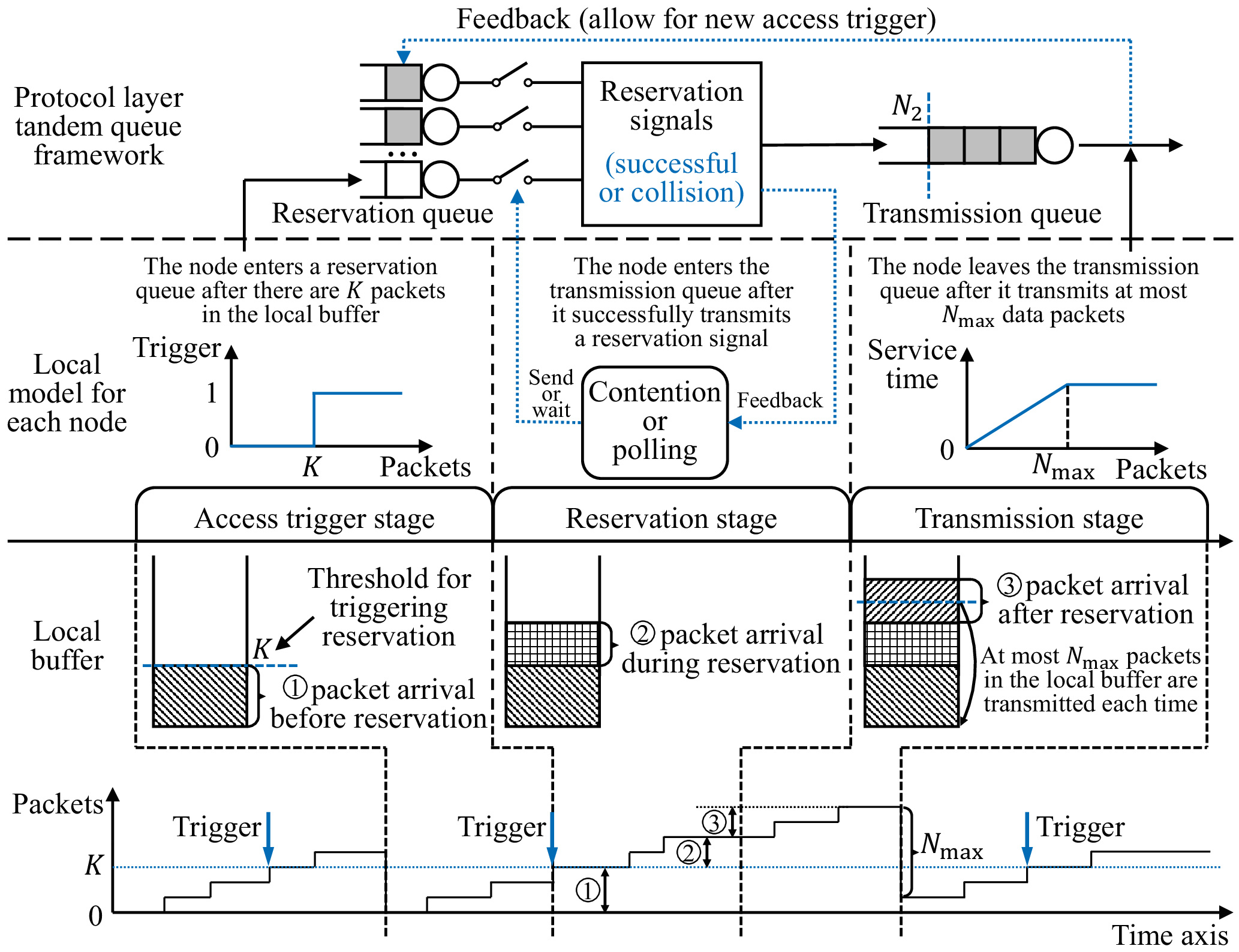}\\
		\caption{The unified framework for multiple access schemes and the tandem queue.}
		\label{system_model}
	\end{figure*}
	
	The local access model of the unified framework for each node is divided into three stages. First in the access trigger stage, the node triggers a new access attempt when there are $K$ data packets in the local buffer. Thus, the threshold for triggering reservation is defined by $K$. After triggering a new access attempt, the node enters the reservation queue to begin the reservation. By sending only one reservation signal for $K$ data packets, the number of reservation signals to be transmitted is reduced, which can be regarded as reducing the arrival rate for the reservation queue. Next in the reservation stage, the node sends the reservation signal in a contention-based manner based on the collision resolution algorithm or in a contention-free manner. The collision resolution algorithms are also known as random access algorithms. The node enters the transmission queue after it successfully transmits a reservation signal. Third in the transmission stage, the node transmits data packets sequentially in the transmission queue. When the node comes to the head of the transmission queue, the node transmits at most $N_{\mathrm{max}}$ data packets, which is referred to as the transmission constraint, in the local buffer. Note that $N_{\mathrm{max}}$ can be greater than $K$ since new data packets can arrive during the reservation and transmission stage. If there are more than $N_{\mathrm{max}}$ data packets in the local buffer, the node may trigger a new reservation even when it still stays in the transmission queue. In the access procedure, each node enters the reservation queue and then the transmission queue. Since the departure process of the reservation queue is the same as the arrival process of the transmission queue, the two virtual queues constitute a tandem queue model. The reservation-based access protocol maintains a finite length of the tandem queue. Thus, we assume that the maximum length of the reservation queue and the transmission queue is given by $N_1$ and $N_2$, respectively. 
	
	For the reservation stage, each node needs to successfully send a reservation signal prior to its packet transmission. An ACK signal is sent by the recerver after a reservation signal is successfully transmitted through a broadcast channel to inform all nodes. Thus, all nodes are aware of the current state of the reservation, which ensures the packet transmission based on reservation does not collide with each other. Based on the reservation of nodes, all data packets are transmitted sequentially in a contention-free manner. The sending of reservation signals is conducted in a contention-based manner of contention-free manner. Specifically, with the contention-based manner, each node decides to send the reservation signal based on the collision resolution algorithm such as Aloha. A reservation signal is successfully transmitted by a node only when no other nodes send reservation signals in the same timeslot. In other words, if more than one node sends the reservation signal in the same timeslot, then a collision occurs in that timeslot while no reservation is made in that timeslot. It addresses the sporadic packet arrivals and the massive number nodes. With the contention-free manner, the receiver asks each node in a cyclic order whether the node has data packets to transmit. If a node has triggered an access and entered the reservation queue, then the node should answer the receiver when the receiver asks the node, which is regarded as successfully transmission of the reservation signal. After that, the node enters the transmission queue. Therefore, each node in the reservation stage waits for the receiver to ask instead of trying to send the reservation signal according to collision resolution algorithms, hence alleviating the collision among nodes. However, with the contention-free scheme, the node must wait for the receiver to ask all nodes one by one, which brings extra AoII in the reservation stage especially in scenarios with a massive number of nodes. 
	
	The reservation signal and data packets are multiplexed in the time domain or frequency domain for transmission. We consider that the bandwidth for the reservation signal and the data packets are $w_1$ and $w_2$, respectively. Since the transmission rate is proportional to the bandwidth, we simply denote the transmission rate as $w_1$ and $w_2$. The total transmission rate is constrained by $w = w_1 + w_2$. Without loss of generality, we consider $w = 1$ throughout this paper. The bandwidth $w_1$ and $w_2$ can be fixed or adjusted dynamically to implement different access control protocols. We assume the size of each data packet is $c$ while the size of a reservation signal is one. The time for sending the ACK signal by the receiver is assumed to be negligible. The time for transmitting a reservation signal is given by
	\begin{align}
		T_1 = \frac{1}{w_1},
	\end{align}
	which is inversely proportional to the bandwidth $w_1$. The time for transmitting each data packet is given by
	\begin{align}
		T_2 = \frac{c}{w_2}.
	\end{align}
	We consider the AoII metric to represent the performance of the multiple access network. To improve the freshness of the status updates, each node only stores the freshest data packet in the local buffer. The AoII is defined as the time elapse since a data packet is generated in the local buffer. When a new data packet is generated, the old one is evicted while AoII continues increasing with time. Thus, the AoII represents the freshness of information, which is defined as
	\begin{align}
		\Delta_{I}(t) &= \begin{cases}
			0, & \text{if no new data packet is generated}\\ &\text{after the last packet transmission,} \\
			t - U(t), &\text{otherwise},
		\end{cases} 
	\end{align}
	where $U(t)$ is the generation time of the first new data packet after the last packet transmission by the node. To analyze the average AoII metric, it requires large-model approaches since the Markov chain model can be high-dimensional. 
	
	\begin{proposition} 
		In multi-access communications where each user exclusively retains its own freshest packet, its average peak AoII is equal to the average waiting time of its successfully delivered packets.		
		\label{lem-1}
	\end{proposition}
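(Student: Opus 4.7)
My plan is a sample-path cycle decomposition at a tagged user, followed by passage to long-run averages via ergodicity of the Markov chain model formulated in Sections III--V. Fix a tagged user and let $t_k$ denote the time of its $k$-th successful delivery, with $t_0 := 0$. Within the cycle $(t_{k-1}, t_k]$, let $U_k$ be the generation time of the first new packet after $t_{k-1}$; since the cycle ends in a successful delivery, such a $U_k$ exists and $U_k \le t_k$. Reading the AoII definition directly, $\Delta_I(t) = 0$ on $[t_{k-1}, U_k]$ because no new packet has been generated since the previous delivery, while $\Delta_I(t) = t - U_k$ on $(U_k, t_k]$ because subsequent refreshings of the buffer do not alter $U(t)$. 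The peak within the cycle, attained just before the $k$-th transmission, is therefore $t_k - U_k$.

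Next, I would identify the waiting time of the packet delivered at $t_k$. Under freshest-only retention, the buffer is empty on $[t_{k-1}, U_k)$ and continuously occupied on $[U_k, t_k]$, with each later arrival merely overwriting the stored payload rather than queueing behind it. The packet that leaves the buffer at $t_k$ is therefore the materialization of the update request first triggered at $U_k$, and its waiting time, measured from entering the buffer to successful departure, equals $t_k - U_k$. Hence peak AoII and the delivered-packet waiting time coincide sample-path-wise within every cycle.

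Finally, I would average the per-cycle identity over $k = 1, \ldots, K$ and let $K \to \infty$: both sides converge almost surely to their long-run expectations by ergodicity of the discrete-time tandem-queue Markov chain, yielding the claim. The main obstacle is conceptual rather than computational: one could alternatively define the delivered packet's waiting time as $t_k - \tau_k$, where $\tau_k \in [U_k, t_k]$ is the generation instant of the particular payload actually transmitted, in which case the identity fails by $\tau_k - U_k$. The argument therefore hinges on committing to the interpretation, consistent with how $U(t)$ enters the AoII definition, that the delivered packet inherits the arrival epoch of the earliest unfulfilled update in the current busy period; once this convention is adopted, the equality is a one-line identification per cycle.
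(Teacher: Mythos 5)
Your proposal is correct and follows essentially the same route as the paper's proof: both identify that each peak of AoII occurs precisely at a successful delivery and equals the time elapsed since the first new packet generation after the previous delivery, i.e., the delivered packet's sojourn time in the system. Your version is more explicit (per-cycle sample-path identity plus ergodic averaging) and usefully flags the interpretive point that the delivered packet must be credited with the arrival epoch $U_k$ of the earliest unfulfilled update rather than the generation instant of the overwriting payload, a convention the paper's terser argument leaves implicit.
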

	\begin{proof}
			In the scenario where each user exclusively retains the freshest packet in its buffer and computes the AoII independently, the AoII is equal to the duration a packet remains within the multi-access communications. Furthermore, the peak AoII corresponds to the total time a packet resides within the system. This alignment arises from the fact that each instance of a peak AoII occurs precisely when a packet is received. Consequently, the average peak AoII is equal to the average time a packet spends in multi-access communications.
	\end{proof}
	Furthermore, we consider the average peak AoII metric to reduce the dimension of the modeling towards average peak AoII. Specifically, let $\bar{L}$ denote the average number of data packets in the tandem queue. The peak AoII of a data packet is defined as the maximum value of AoII achieved before the reception of a data packet. The average peak AoII in the second and third stages is obtained through the Little's Law, given by
	\begin{align}
		\label{eq_average_delay}
		\ell = \frac{\bar{L}}{\lambda}.
	\end{align}
	Thus, we only need to track the state of the whole system in terms of the total queue length, instead of the individual state of each transmitter node, which simplifies the Markov chain formulation. A data packet may experience an extra delay in the access trigger stage when $K > 1$, which increases the peak AoII. The average time of waiting for a data packet arrival at a node is given by $\frac{M}{\lambda}$. Thus, the peak AoII in the first stage is given by
	\begin{align}
		\ell_0 &= \frac{M}{\lambda K}l\sum_{i=0}^{K-1}i \nonumber \\&=\frac{M(K-1)}{2\lambda}.
	\end{align}
	Increasing $K$ for the access trigger stage can reduce the signaling overhead while the peak AoII increases. In addition, the peak AoII $\ell_0$ increases linearly with the number of nodes in the network $M$. As the number of data packets transmitted related to each reservation signal increases, the signaling overhead for the reservation signal can approach zero. Thus, to attain finite peak AoII, the total arrival rate should be less than the maximal transmission rate of the transmission queue. Thus, the total arrival rate satisfies that $\lambda < \frac{w}{c}$.

	\section{Large Markov Model for AoII Oriented Multiple Access}
	
	In this section, we formulate a large model Markov chain based on the unified framework towards the AoII metric. The local buffer of each node only stores the freshest data packet. We represent the system state based on the state of each node. The state of node $i$ is $(q_i, e_i)$ in which $q_i$ represents the AoII at the node and $e_i$ represents the reservation state of the node. Let $s \in \{1, \ldots, M\}$ denote state of the reservation procedure. When a node begins transmission of data packets, the remaining length of data packets to be transmitted is denoted by $t_d \in \{0, \ldots, c\}$. For $t_d=0$, it means that nodes in the reservation queue are trying to send reservation signal in the current timeslot based on their reservation state. Thus, the system state is denoted by $S = (s,e_1, \ldots, e_M, q_1,\ldots,q_M,t_d)$. 
	
	Both reservation signals and data packets consumes bandwidth resources for transmission. There are three typical multiplexing schemes for reservation signals and data packets including time-division (TD), frequency-division (FD), and dynamic bandwidth allocation scheme (XD). In this section, to reduce the average AoII, we focus on the XD scheme. Specifically, after a node successfully transmits a reservation signal, the node begins transmitting the data packet in the next timeslot while the transmission lasts for $c$ timeslots. Other nodes cannot send the reservation signals until the current node finishes its transmission stage. 
	
	In each timeslot, the state transition is divided into two steps. First, each node send the reservation signal according to its reservation state and the state of the current reservation procedure. Second, new data packets may arrive at each node. In the first step, if $t_d > 0$, then a node is in the transmission queue and transmitting the data packet. Thus, the state $t_d$ transits to $t_d - 1$ while other states do not change. If $t_d = 0$, then nodes in the reservation queue send reservation signals based on the polling or contention mechanism. The reservation state of each node $e_i$ and the state of the reservation procedure $s$ changes accordingly. After the first step, the state transits from $S = (s, e_1, \ldots, e_M, q_1, \ldots, q_M, t_d)$ to intermediate state $S' = (s', e_1', \ldots, e_M', q_1', \ldots, q_M', t_d')$. 
	
	In the second step, the transition probability is obtained based on the packet arrival distributions. Let $\bar{a}$ denote the probability that there are at least one data packets arriving at a node in a timeslot, given by
	\begin{align}
		\bar{a} = 1 - e^{-\bar{\lambda}}.
	\end{align}
	The state $q_i'$ is the age of the data packets at node $i$. Thus, when $q_i' > 0$, then the age increases by one. When $q_i' = 0$, it transits to $q_i'' = 1$ with probability $\bar{a}$. The state $s', e_1', \ldots, e_M'$ and $t_d'$ do not change in the second step. Consider the constraint of each node's AoII is denoted by $N$, which is assumed to be finite for the purpose of formulating finite-dimension Markov chain.\footnote{By setting the constraint $N$ large enough, the probability that the age exceeds $N$ can approach zero. Thus, this constraint may not affect the performance of AoII metric.} Thus the transition probability of state $q_i'$ is given by
	\begin{align}
		\label{eq_transition_aoii_1}
		p'_{i,q_i',q_i''} = \begin{cases}
			\bar{a}, & q_i'' = 1, q_i' = 0, \\
			1 - \bar{a}, & q_i'' = q_i' = 0, \\
			1, & q_i'' = \min(q_i' + 1, N), q_i' > 0.
		\end{cases}
	\end{align}
	Since packet arrival for all nodes are independent, the transition probability from $S' = (s', e_1', \ldots, e_M', q_1', \ldots, q_M', t_d')$ to $S'' = (s'', e_1'', \ldots, e_M'', q_1'', \ldots, q_M'', t_d'')$ is given by
	\begin{align}
		\label{eq_polling_transition_step2}
		p'_{S', S''} = \prod_{i=1}^{M}p'_{i,q_i',q_i''}
	\end{align}
	\subsection{Polling Scheme}
	We first consider the polling scheme in which the reservation signals are transmitted in a collision-free manner. Since the receiver asks each node in turn to send the reservation signal, the reservation state of each node is represented by waiting or being asked in the current timeslot. The state of the reservation procedure $s \in \{1, \ldots, M\}$ denotes the index of node in the polling procedure that the receiver currently asks. For simplicity, we omit the reservation state $e_i$ of each node. Thus, the system state is denoted by $S = (s,q_1,\ldots,q_M,t_d)$. The entire state space is denoted by $\mathcal{S}$.
	
	In the first step, when $t_d = 0$, the receiver asks the next node $s'$ (of index $s+1$ if $s < M$ or index $0$ if $s = M$). Then node $s'$ begins transmitting data packets of length $c q_{s'}$. Note that if $q_{s'} = 0$, then the node actually has nothing to transmit. As a result, the state transits from $S = (s, q_1, \ldots, q_M, t_d)$ to intermediate state $S' = (s', q_1', \ldots, q_M', t_d')$ that is determined by $S$. Specifically, state $s'$ and $d'$ is given by
	\begin{align}
		s' &= \begin{cases}
			s + 1 , & t_d = 0, s < M , \\
			1, & t_d = 0, s = M, \\
			s, & t_d > 0,
		\end{cases} \label{eq_relation_d1}\\
		t_d' &= \begin{cases}
			q_{s'}c, & t_d = 0 \\
			t_d - 1, & t_d > 0.
		\end{cases}
	\end{align}
	For $i = 1, \ldots, M$, the state $q_i'$ is given by
	\begin{align}
		\label{eq_transition_polling_q}
		q'_{i} &= \begin{cases}
			0, & t_d = 1, i = s, \\
			q_i, & \mathrm{otherwise}.
		\end{cases} 
	\end{align}

	Based on Eqs. (\ref{eq_transition_aoii_1})-(\ref{eq_transition_polling_q}), we can derive the transition probability matrix of the Markov chain model. The transition probability matrix is denoted by $P$, of which the element $p_{S,S''}$ is the transition probability from $S$ at the beginning of the current timeslot to state $S''$ at the beginning of the next timeslot.  Specifically, the element $p_{S,S''}$ for each pair $S, S''$ is derived by first obtaining the intermediate state $S'$ from $S$. Then the probability is given by
	\begin{align}
		p_{S,S''} = p'_{S',S''}.
	\end{align}

	The transition matrix $P$ is of order $r_1 = (N+1)^M M (c+1)$. The transition matrix is sparse since each state can only transits to a few states. Thus, we can obtain the transition probabilities starting from each state to reduce the complexity. The methods of obtaining the transition matrix is shown in Algorithm \ref{algorithm_transition_matrix}. Since any multiple access protocols can be regarded as a formal language \cite{hopcroft_book}, the algorithm to generate the transition matrix can be adopted for any multiple access protocols. We can also obtain the large but sparse transition matrix for other multiple access protocols through such algorithms accordingly. 
	
	\begin{algorithm}[!t]
		\caption{Algorithm to obtain transition matrix}
		\begin{algorithmic}[1]
			\label{algorithm_transition_matrix}
			\STATE Initialize $p_{S,S''} \leftarrow 0$ for all $S,S'' \in \mathcal{S}$
			\FORALL{$S=(s,q_1,\ldots,q_M,t_d)\in\mathcal{S}$}
			\STATE $S'=(s',q_1',\ldots,q_M',t_d')\leftarrow S$
			\IF{$t_d = 0$}
			\IF{$s < M$}
			\STATE $s' \leftarrow s + 1$
			\ELSE
			\STATE $s' \leftarrow 1$
			\ENDIF
			\STATE $t_d' \leftarrow c \mathbbm{1}_{\{q_{s+1} > 0\}}$
			\ELSE
			\STATE $t_d' \leftarrow t_d - 1$
			\ENDIF
			\IF{$t_d = 1$}
			\STATE $q_{s+1}' = 0$
			\ENDIF
			\FORALL{$(v_1, \ldots, v_M) \in \{0,1\}^{M}$}
			\STATE $S''=(s'', q_1'', \ldots, q_M'',t_d'') \leftarrow S'$
			\STATE $x = 1$
			\FOR{$k = 1,\ldots,M$}
			\IF{$q_{k+1}'=0$}
			\STATE $q_{k+1}'' \leftarrow v_k$
			\IF{$v_k = 1$}
			\STATE $x \leftarrow x \bar{a}$
			\ELSE
			\STATE $x \leftarrow x (1 - \bar{a})$
			\ENDIF
			\ELSE
			\STATE $q_{k+1}'' \leftarrow \min\{q_{k+1}'+1, N\}$
			\IF{$v_k = 1$}
			\STATE $x = 0$
			\ENDIF
			\ENDIF
			\ENDFOR
			\STATE $p_{S,S''} \leftarrow x$
			\ENDFOR
			\ENDFOR 
		\end{algorithmic}
	\end{algorithm}
	
	Then we can compute the steady-state probabilities of all states. Let a row vector $\bm{\pi}$ denote the steady-state probabilities, which satisfy
	\begin{align}
		\begin{cases}
			\bm{\pi} P = \bm{\pi}, \\
			\bm{\pi} \bm{1}_{r_1} = 1. 
		\end{cases}
	\end{align}
	
	Thus, we have $\bm{\pi} \left(P - I_{r_1} + \bm{1}_{r_1\times r_1}\right) = \bm{1}_{r_1}^{\mathrm{T}}$. The steady-state probabilities are given by
	\begin{align}
		\label{eq_steady_state_polling}
		\bm{\pi} = \bm{1}_{r_1}^{\mathrm{T}} \left(P - I_{r_1} + \bm{1}_{r_1\times r_1}\right)^{-1}. 
	\end{align}
	To address the high dimension embedded in the state space and transition matrix, large model based approaches are applied to solve the large sparse chain \cite{large_sparse_chain_1, large_sparse_chain_2}.
	
	The steady-state probability for a given state $S=(s,q_1,\ldots,q_M,t_d)$ is denoted by $\pi_{s,q_1,\ldots,q_M,t_d}$. The average AoII of all nodes are given by
	\begin{align}
		\label{eq_delta_polling}
		\bar{\Delta} = \sum_{q_1=0}^{N} \cdots \sum_{q_M=0}^{N} \sum_{i=1}^{M}\frac{q_i}{M} \sum_{t_d=0}^{c} \sum_{s=1}^{M} \pi_{s,q_1,\ldots,q_M,t_d}.
	\end{align}

	\subsection{Random Access Scheme}
	
	We next consider the random access scheme to allow each node to send reservation signals in a contention-based manner. The contention during the reservation stage is handled by the collision resolution algorithm. Thus, we first present a general characterization for arbitrary collision resolution algorithms, which are represented by Markov chains. Let $\mathcal{S}_{\mathrm{R}}$ denote the state space of the reservation procedure and $r_0 = \left|\mathcal{S}_{\mathrm{R}}\right|$ denote the size of the state space. The transition of the Markov chain represents the process of contention among reservation signals in the reservation queue during each timeslot, while the reservation state of each node $s_i$ changes accordingly. On one hand, if no reservation signal is successfully transmitted in a timeslot, then the transition probability from state $s_i\in\mathcal{S}_{\mathrm{R}}$ to state $s_j\in\mathcal{S}_{\mathrm{R}}$ is given by $X_{0,ij}$. We can obtain an $r_0\times r_0$ transition matrix $X_0$ with elements $X_{0,ij}$ for $1 \leq i \leq r_0$ and $1 \leq j \leq r_0$, given by
	\begin{align}
		X_0 = \left[\begin{array}{ccc}
			X_{0, 11} & \cdots & X_{0, 1r_0} \\
			\vdots & \ddots & \vdots \\
			X_{0, r_01} & \cdots & X_{0, r_0r_0}
		\end{array}\right].
	\end{align}
	On the other hand, if a reservation signal is successfully transmitted in a timeslot, then the transition probability from state $s_i \in \mathcal{S}_{\mathrm{R}}$ to state $s_j \in \mathcal{S}_{\mathrm{R}}$ is given by $X_{1,ij}$, which constitute an $r_0\times r_0$ transition matrix $X_1$ given by
	\begin{align}
		X_1 = \left[\begin{array}{ccc}
			X_{1, 11} & \cdots & X_{1, 1r_0} \\
			\vdots & \ddots & \vdots \\
			X_{1, r_01} & \cdots & X_{1, r_0r_0}
		\end{array}\right].
	\end{align}
	At the beginning of each timeslot, a new collision resolution process (CRP) may begin. The state for a new CRP depends on the length of the reservation queue. When the current state is $s_i$ and there are $n$ nodes in the reservation queue, the transition probability from state $s_i$ to state $s_j$ is denoted by $Y_{n,ij}$. We can obtain $r_0\times r_0$ transition matrices $Y_n$ for $n = 0, 1, \ldots$ with elements $Y_{n,ij}$ for $1 \leq i \leq r_0$ and $1 \leq j \leq r_0$, given by
	\begin{align}
		Y_n = \left[\begin{array}{ccc}
			Y_{n, 11} & \cdots & Y_{n, 1r_0} \\
			\vdots & \ddots & \vdots \\
			Y_{n, r_01} & \cdots & Y_{n, r_0r_0}
		\end{array}\right].
	\end{align}
	A collision resolution algorithm is defined by matrices $X_0$, $X_1$, and $Y_n$. 
	
	We present the matrices $X_0$, $X_1$, and $Y_n$ for a typical collision resolution algorithm known as the tree splitting algorithm \cite{tree_splitting}. All reservation signals involved in the collision resolution algorithm are referred to as packets here. When a collision occurs, a new CRP begins while all packets involved in the collision are split into two subsets with equal probabilities. Any newly generated packets should wait until all packets in the current CRP have been successfully transmitted. In each timeslot, a subset of packets are transmitted. If no collision occurs, then the next subset of packets are transmitted in the next timeslot. If a collision occurs, then these involved packets are split again into two subsets to be transmitted in the next timeslots.
	
	\begin{figure}
		\centering
		\includegraphics[width=1\linewidth]{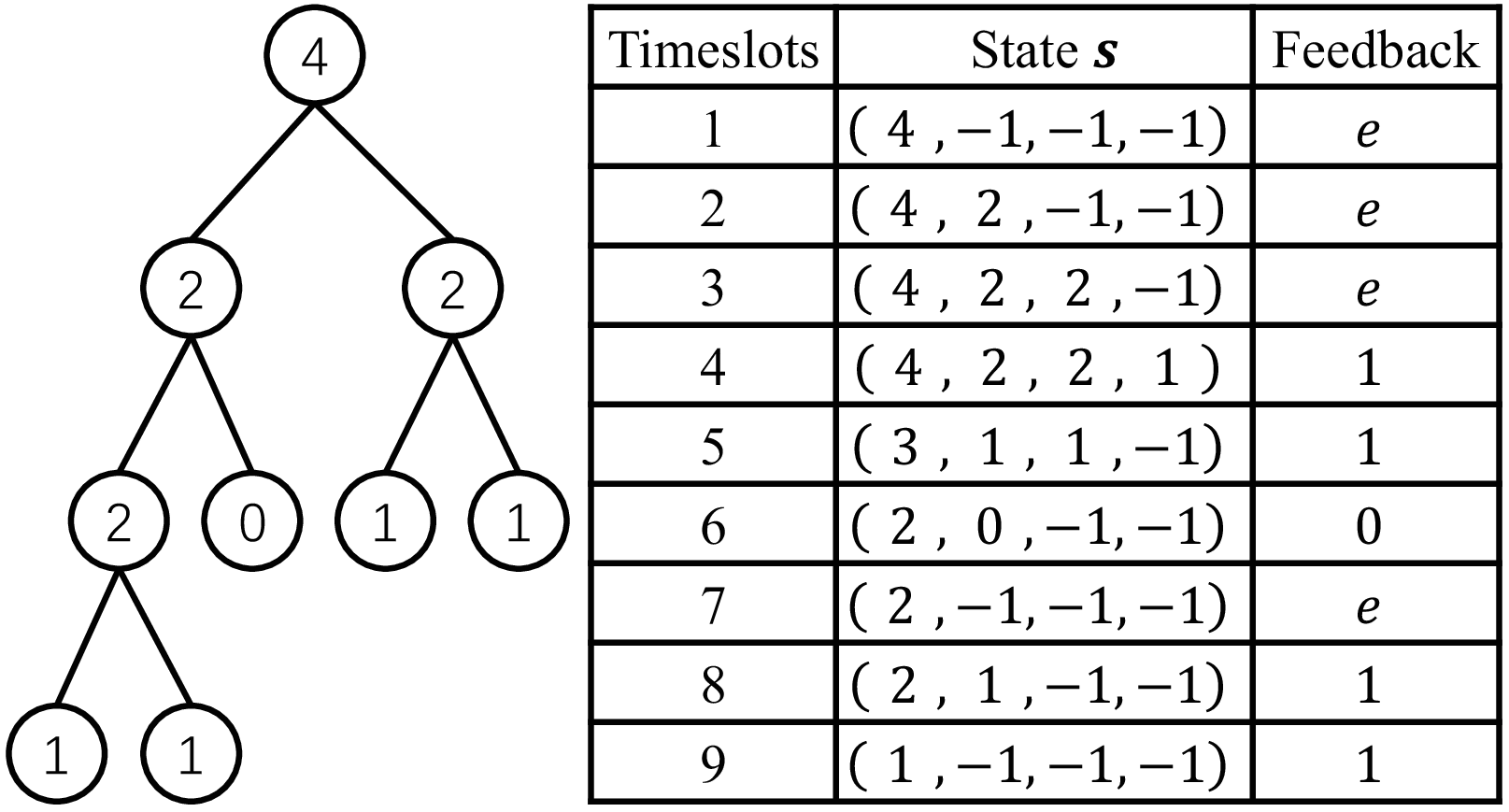}\\
		\caption{An illustration of the tree splitting algorithm. There are 4 packets in the CRP, which takes 9 timeslots to address the collision. In each timeslot, the feedback $e$, $1$, and $0$ represent a collision, a successful transmission, and an idle timeslots, respectively.}
		\label{fig_tree_splitting}
	\end{figure}
	
	The CRP is characterized by a binary tree. After the $i$th split to the $i$th layer, the number of packets in the split subset is $m_i$, for $i = 0, 1, \ldots, R$. We assume that the tree splitting algorithm is split for at most $R$ layers. Suppose the tree splitting algorithm has been split to the $x$th layer. The $m_x$ packets in the $x$ layer is sent in the current timeslot. We set $m_i = -1$ for $i > x$. Let $s = (m_0, \ldots, m_R)$ denote the state of the tree splitting. The state transition is illustrated in Fig. \ref{fig_tree_splitting}. In addition, the reservation state of each node $e_i$ is represented by the layer of the node, i.e., $e_i = x$ when the node is at the $x$th layer of the binary tree. For an idle timeslot with $m_x = 0$, the splitting layer decreases by one. The reservation state of all nodes remains unchanged. When a collision occurs with $m_x \geq 2$, then the $m_x$ packets are split into two subsets containing $m_{x+1}'$ and $m_x - m_{x+1}'$ packets with probability given by
	\begin{align}
		\frac{\binom{m_x}{m_{x+1}'}}{2^{m_x}}.
	\end{align}
	Each node at the $x$th layer may be split to the next layer $x+1$ with probability $\frac{1}{2}$. Particularly if a collision occurs when $x=R$, then the packets are moved out of the current CRP without split. These packets may join the next CRP and the reservation state of them becomes zero. If $m_x = 1$, then a packet is successfully transmitted in the current timeslot. The packet at the $x$th layer can begin its transmission. Thus, the reservation state of the node becomes zero. Therefore, for the reservation state of each node, only those at the $x$th layer of the binary tree may change the state. For a node at the $x$th layer, the transition probability of its reservation state from $e_i = x$ to $e_i'$ is given by
	\begin{align}
		p_{i,e_i,e_i'} = \begin{cases}
			1, & \mathrm{if} x = R, m_x \geq 2, e_i' = 0, \\
			1, & \mathrm{if} m_x = 1, e_i' = e_i \mathbbm{1}_{\{d > 1\}}, \\
			\frac{1}{2}, & \mathrm{if} m_x \geq 2, e_i' \in \{e_i, e_i + 1\}, \\
			0, & \mathrm{otherwise}.
		\end{cases}
	\end{align}
	Hence, in cases where no packet is successfully transmitted, we can depict the transition matrix $X_0$ governing the states of the reservation procedure. The matrix's elements are given by
	\begin{align}
		\label{eq_tree_X0}
		&X_{0,ij} = \nonumber\\
		&\begin{cases}
			1, & \mathrm{if}~ m_x = 0, m_x' = -1, m_k' = m_k ~\mathrm{for}~ k \neq x, \\
			1, & \mathrm{if}~ x = R, m_x \geq 2, m_k' = m_k - m_x, m_x' = -1, \\
			\frac{\binom{m_x}{m_{x+1}'}}{2^{m_x}}, & \mathrm{if}~ x < R, m_x \geq 2, m_k' = m_k ~\mathrm{for}~ k \neq x+1, \\
			0, & \mathrm{otherwise},
		\end{cases}
	\end{align}
	which represents the transition probability of reservation procedure state from $s_i = (m_0, \ldots, m_R)$ to state $s_j = (m_0', \ldots, m_R')$. 
	
	If $m_x = 1$, then a packet is successfully transmitted in the current timeslot and the splitting layer decreases by one. The number of packets $m_k$ for layers $k < x$ is reduced by one. Thus, the elements of the transition matrix $X_1$ is given by
	\begin{align}
		X_{1, ij} = \begin{cases}
			1, & \mathrm{if}~ m_x = 1, m_k' = -1 ~\mathrm{for}~ k \geq x, \\
			& \quad~ m_{\ell}' = m_{\ell} - 1 ~\mathrm{for}~ \ell < x, \\
			0, & \mathrm{otherwise}.
		\end{cases}
	\end{align}
	Note that the state of the reservation procedure characterizes the whole system while the reservation state of each node characterizes individual node information. Thus, in this case the total amount of the layer number for all nodes are obtained by $\bm{e}$ or $s$. The states satisfy that 
	\begin{align}
		\sum_{i=1}^{x} m_i = \sum_{i=1}^{M} e_i, 
	\end{align}
	otherwise the state is not a feasible state in the state space. 
	
	At the beginning of a timeslot, a new CRP begins if all packets in the current CRP has been successfully transmitted with $m_0 = -1$, otherwise the state $s$ remains unchanged. Therefore, when the number of packets in the reservation queue is $n$, the transition matrix $Y_n$  is given by
	\begin{align}
		Y_{n,ij} = \begin{cases}
			1, & \mathrm{if}~ m_0 = -1, m_0' = n, m_k' = -1 ~\mathrm{for}~k>0, \\
			1, & \mathrm{if}~ m_0 \geq 0, m_k' = m_k ~\mathrm{for}~ k\geq0, \\
			0, & \mathrm{otherwise}.
		\end{cases}
	\end{align}

	According to the definition of the state for the tree splitting algorithm, it satisfies that $m_i \leq m_j$ for $i > j$, and $-1 \leq m_i \leq N$. The size of the state space of $s$ is given by $r_0 = \binom{N+R+2}{R}$. For the state of the whole system, the transition probability in the second step is the same as that under the polling scheme. In the first step, the transition probability of $s$ and $e_i$ is given as above. As for state $t_d$ and $q_i$ in the first step, the next state $d'$ and $q_i'$ for $i = 1, \ldots, M$ are given by
	\begin{align}
		t_d' &= \begin{cases}
			c, & t_d = 0, m_x = 1, \\
			0, & t_d = 0, m_x \neq 1, \\
			t_d - 1, & t_d > 0.
		\end{cases} \\
		q_i' &= \begin{cases}
			0, & t_d = 1, e_i = x, \\
			q_i, & \mathrm{otherwise}.
		\end{cases}
	\end{align}
	
	Therefore, we can derive the transition probability matrix $P$. The steady-state probability is computed by Eq. (\ref{eq_steady_state_polling}). Based on that, the average AoII of all nodes is given by
	\begin{align}
		\bar{\Delta} = \sum_{q_1=0}^{N} \cdots \sum_{q_M=0}^{N} \sum_{i=1}^{M}\frac{q_i}{M}  \sum_{t_d=0}^{c} \sum_{s,e_1,\ldots,e_M} \pi_{s,e_1,\ldots,e_M,q_1,\ldots,q_M,t_d},
	\end{align}
	in which $\sum\nolimits_{s,e_1,\ldots,e_M}$ represents the summation for all state $s, e_1, \ldots, e_M$.

	\section{Reduced-Dimensional Markov Model for Average Peak AoII Minimization}
	
	\begin{figure}
		\centering
		\includegraphics[width=1\linewidth]{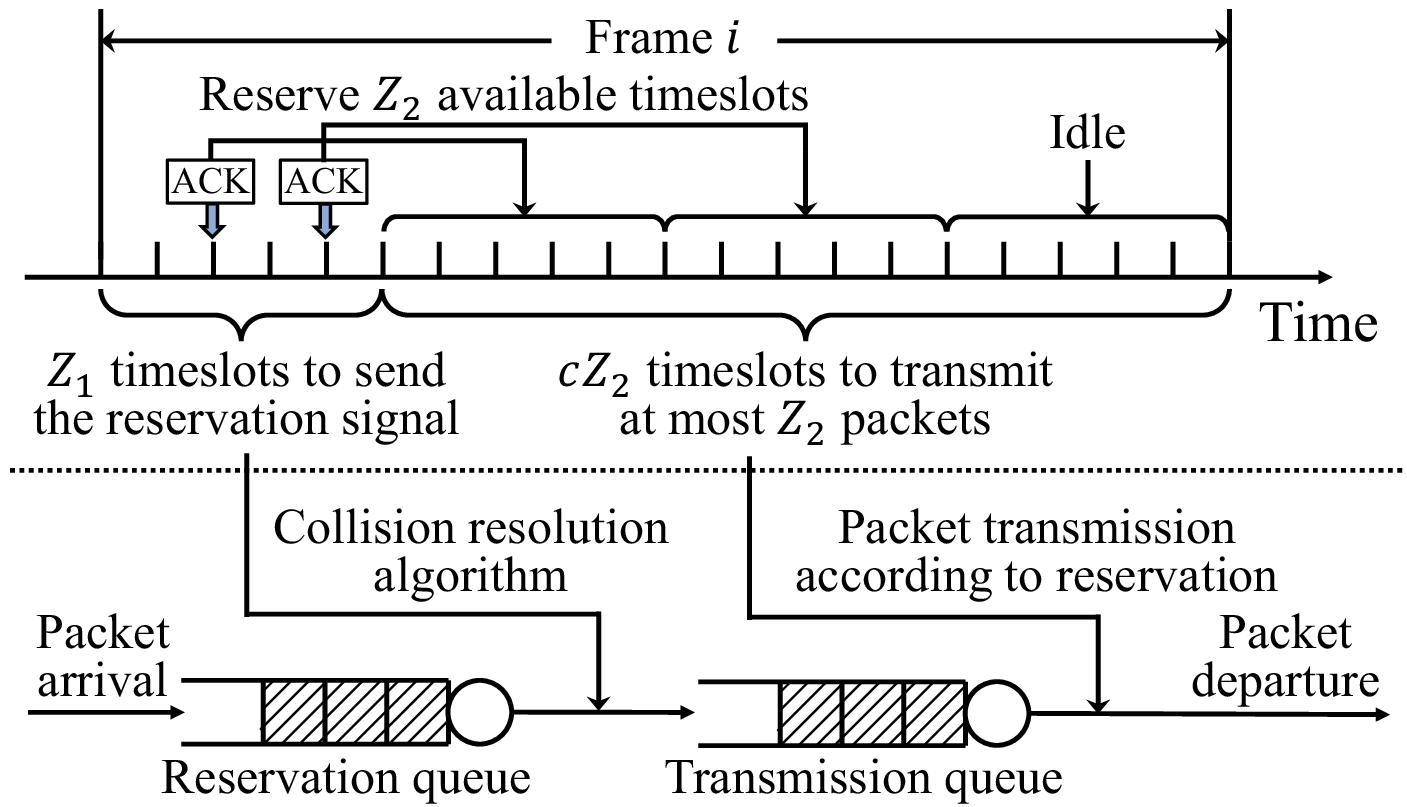}\\
		\caption{An illustration of the structure of the frame.}
		\label{system_model_TD}
	\end{figure}
	
	In this section, we formulate the Markov chain towards the peak AoII metric. All data packets are transmitted to the receiver without eviction. The peak AoII analysis for the polling scheme with various characteristics has been studied in \cite{polling_analysis_queuing}. The existing methods and results can be adopted in the presented unified framework. Thus, we focus on the random access scheme in this section. In the peak-AoII oriented scenarios, we can formulate the Markov chain based on the total number of nodes in the reservation queue and transmission queue. Thus, the dimension does not increase with the number of nodes in the system. The dimension of the Markov chain is significantly reduced. We consider three multiplexing schemes of the reservation signals and data packets under the unified framework including TD, FD, and XD schemes. 
	
	\subsection{Time-Division Multiplexing}
	

	We first consider the TD scheme, which is the basic structure for reservation-based random access in slotted systems. We consider the threshold for triggering reservation is $K=1$ for the access trigger stage in this section. The transmission constraint is also $N_{\mathrm{max}} = 1$. The time axis is split into frames, which contains a fixed number of timeslots, as shown in Fig. \ref{system_model_TD}. The reservation signal and data packets are multiplexed in timeslots of each frame. Specifically, in the first $Z_1$ timeslots of the frame, the nodes in the reservation queue send reservations signals. In the next $cZ_2$ timeslots of the frame, the nodes in the transmission queue transmit data packets. Thus, at most $Z_1$ reservation signals and $Z_2$ data packets can be transmitted in each frame. The principal formulation methods for the Markov chain of TD scheme can be found in our previous work \cite{conference_TD}. We provide the detailed justification of this part under the unified framework in Appendix \ref{appendix_TD}.

	\subsection{Frequency-Division Multiplexing}
	
	\begin{figure}[!t]
		\centering
		\includegraphics[width=1\linewidth]{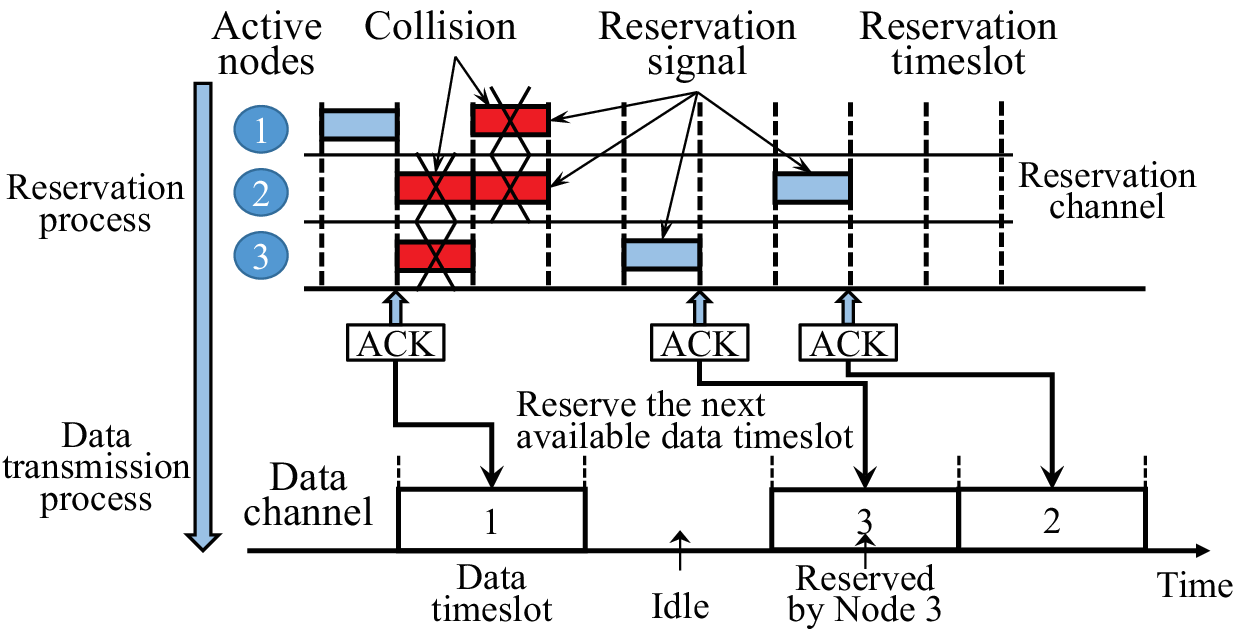}\\
		\caption{The frequency division allocation scheme.}
		\label{system_model_FD}
	\end{figure}
	
	We next consider the FD scheme that multiplexing the reservation signal and data packets in the frequency domain, as shown in Fig. \ref{system_model_FD}. Since the reservation signals and data packets are transmitted with separate spectrum, we consider a reservation channel and a data channel allocated with a fixed amount of bandwidth $w_1$ and $w_2$, respectively. Thus, the reservation signals are transmitted in the reservation channel while the data packets are transmitted in the data channel. For the access trigger stage and transmission stage, we consider $K = N_{\mathrm{max}}$. In other words, each successfully transmitted reservation signal can reserve a time interval of length $KT_2$ in the data channel for packet transmission. We denote the length of a reservation timeslot and a data timeslot by $T_1$ and $KT_2$, respectively.
	
	For the reservation stage, we consider the simple Aloha algorithm in this section. Each node sends the reservation signal in a timeslot with probability $\beta_i$ when there are $i$ reservation signals in the reservation queue. Here, we assume that all nodes know the number of reservation signals in the reservation queue. Thus, the probability that a reservation signal is successfully transmitted in a timeslot when there are $i$ reservation signals in the reservation queue is given by
	\begin{align}
		\gamma_i = i \beta_i (1 - \beta_i)^{i - 1}.
	\end{align}
	The optimal $\beta_i$ maximizing the successful reservation probability $\gamma_i$ is given by $\beta_i = \frac{1}{i}$. Moreover, we assume that the number of nodes in the reservation queue is known by the receiver and all nodes through collision level estimation. Specifically, the received energy level at the receiver increases linearly with the number of reservation signals sent by nodes. Thus, the number of nodes sending reservation signals can be estimated based on the received energy level. With the knowledge of the number of nodes in the reservation queue, we can apply more effective access control schemes for the reservation-based random access. Therefore, we consider each node sends the reservation signal with probability $\beta_i$. The probability that a reservation signal is successfully transmitted is given by
	\begin{align}
		\gamma_i = \begin{cases}
			\left(1 - \frac{1}{i}\right)^{i-1}, & i \geq 2, \\
			1, & i = 1, \\
			0, & i = 0,
		\end{cases}
	\end{align}
	where $\gamma_0 = 0$ since no reservation signal is transmitted when there is no reservation signal in the reservation queue. Particularly, for this type of Aloha, the state of collision resolution is the same as the length of the reservation queue since each node sends the reservation signal with a probability determined by the length of the reservation queue. Thus, for simplicity, the state of the tandem queue here is represented by $(q_2, q_1)$ including the length of the reservation queue and the transmission queue.
	
	We define the length of each frame as $KT_2$, which is the time for transmitting a combined packet. We focus on the state of the tandem queue at the beginning of each data timeslot and formulate a Markov chain model. The state transition depends on the packet arrival and reservation process during a data timeslot. If there are $y$ new reservation signals and $z$ successful transmission of reservation signals in a data timeslot, the state transition from state $(q_2, q_1)$ to state $(q_2', q_1')$ is represented by
	\begin{align}
		\label{eq_transition_yz}
		q_2' &= \begin{cases}
			q_2 - 1 + z, & q_2 > 0, \\
			z, & q_2 = 0,
		\end{cases} \\
		q_1' &= q_1 + y - z.
	\end{align}
	
	Next, we present the probability distribution for the number of new reservation signals $y$ and the number of successful transmission of reservation signals $z$. First consider the case when $x = \frac{KT_2}{T_1}$ is an integer. In this case, each frame contains $x$ timeslots. The number of new reservation signals in the $i$th timeslot of a frame is $y_i$, which follows Poisson distribution with mean $\frac{\lambda T_1}{K}$ since $K$ arrived data packets lead to a new reservation signal with a combined packet. Thus, the probability that there are $y_1, \ldots, y_x$ new reservation signals in each timeslots of the frame is given by
	\begin{align}
		g_{y_1, \ldots, y_x} = e^{-\frac{\lambda T_1 x}{K}} \prod_{i=1}^{x}\frac{\left(\frac{\lambda T_1}{K}\right)^{y_i}}{y_i!}.
	\end{align}
	
	Let $z_i \in \{0,1\}$ denote the number of successfully transmitted reservation signals in the $i$th timeslot of the frame. The probability of $z_i=1$ and $z_i=0$ are $\gamma_k$ or $1-\gamma_k$, respectively, when the length of the reservation queue is $k$. Given that the length of the reservation queue is $q_1$ at the beginning of the frame, the length of the reservation queue at the $i$th timeslot of the frame is given by $k_i = q_1 + \sum_{j=1}^{i-1}(y_j - z_j)$. Thus, we can derive the probability that there are $z_1, \ldots, z_x$ successfully transmitted reservation signals in each timeslot of the frame conditioned on $q_1$ and $y_1, \ldots, y_x$, given by
	\begin{align}
		f_{z_1, \ldots, z_x | q_1, y_1, \ldots, y_x} = \prod_{i=1}^{x}\left[z_i \gamma_{k_i} + (1 - z_i)(1 - \gamma_{k_i})\right].
	\end{align}
	
	Next, we consider general cases when $x = \frac{KT_2}{T_1}$ is a non-integer. A frame may contains $\left\lfloor\frac{KT_2}{T_1}\right\rfloor$ or $\left\lfloor\frac{KT_2}{T_1}\right\rfloor + 1$ timeslots. We assume that the number of timeslots within each frame is independent. Specifically, each frame contains $\left\lfloor\frac{KT_2}{T_1}\right\rfloor$ timeslots with probability
	\begin{align}
		\sigma = 1 - \frac{KT_2}{T_1} + \left\lfloor\frac{KT_2}{T_1}\right\rfloor.
	\end{align}
	Each frame contains $\left\lfloor\frac{KT_2}{T_1}\right\rfloor + 1$ timeslots with probability $1 - \sigma$. Let $\underline{x} = \left\lfloor\frac{KT_2}{T_1}\right\rfloor$ and $\overline{x} = \left\lfloor\frac{KT_2}{T_1}\right\rfloor + 1$. When the length of the reservation queue is $q_1$, we can derive the probability that the total number successfully transmitted reservation signals is $z = \sum_{i=1}^{x}z_i$ and the total number of new reservation signals is $y = \sum_{i=1}^{x}y_i$ in a frame, given by
	\begin{align}
		\label{eq_h_yz}
		&h_{y,z|q_1} \nonumber\\
		& =(1 - \sigma) \sum_{\sum_{i=1}^{\overline{x}} y_i = y} g_{y_1, \ldots, y_{\overline{x}}} \sum_{\sum_{i=1}^{\overline{x}} z_i = z} f_{z_1, \ldots, z_{\overline{x}} | q_1, y_1, \ldots, y_{\overline{x}}} \nonumber\\
		& ~~~~ + \sigma \sum_{\sum_{i=1}^{\underline{x}} y_i = y} g_{y_1, \ldots, y_{\underline{x}}} \sum_{\sum_{i=1}^{\underline{x}} z_i = z} f_{z_1, \ldots, z_{\underline{x}} | q_1, y_1, \ldots, y_{\underline{x}}}. 
	\end{align}
	In Eq. (\ref{eq_h_yz}), we use the notation $\sum\limits_{\sum_{i=1}^{\underline{x}} y_i = y}g_{y_1, \ldots, y_{\underline{x}}}$ represents the summation of $g$ for all $y_1, \ldots, y_{\underline{x}}$ in the set $\left\{y_1, \ldots, y_{\underline{x}} | \sum_{i=1}^{\underline{x}} y_i = y\right\}$, which is of space size $\binom{y + \underline{x} + 1}{y}$. The representation is similar for the four summation notations in Eq. (\ref{eq_h_yz}). For simplicity, let $(x)^{+}$ represent $\max\{x, 0\}$. When the state transits from $(q_2, q_1)$ to $(q_2', q_1')$, the values $y$ and $z$ are obtained according to Eq. (\ref{eq_transition_yz}), given by $y = q_1' + q_2' - q_1 - (q_2-1)^{+}$ and $z = q_2' - (q_2-1)^{+}$. Therefore, the transition probabilities are given by
	\begin{align}
		p_{(q_2,q_1),(q_2',q_1')} = h_{q_1' + q_2' - q_1 - (q_2-1)^{+}, q_2' - (q_2-1)^{+} | q_1}.
	\end{align}
	
	Moreover, the state satisfies that $q_2' \leq N_2$ and $q_1' \leq N_1$ due to the constraint of queue length for the tandem queue. If $q_2' > N_2$ or $q_1' > N_1$, we just drop those data packets exceeding $N$ in the reservation queue and the transmission queue to obtain a finite state space, then the state transits to $q_2' = N_2$ or $q_1' = N_1$ at the beginning of the next timeslot. The transition probabilities under the constraint of queue length $N$ are given by
	\begin{align}
		\label{eq_p_prime}
		p_{(q_2, q_1), (q_2', q_1')}' \!=\! \begin{cases}
			p_{(q_2, q_1), (q_2', q_1')}, &q_1' < N_1, q_2' < N_2 \\
			\sum\limits_{j = N_1}^{\infty}p_{(q_2, q_1), (q_2', j)}, & q_1' = N_1, q_2' < N_2 \\
			\sum\limits_{i = N_2}^{\infty}p_{(q_2, q_1), (i, q_1')}, & q_1' < N_1, q_2' = N_2 \\
			\sum\limits_{i = N_2}^{\infty}\sum\limits_{j = N_1}^{\infty}p_{(q_2, q_1), (i, j)}, & q_1' = N_1,  q_2' = N_2
		\end{cases}
	\end{align}
	With elements given by Eq. (\ref{eq_p_prime}), we can obtain the transition matrix $P$ of order $r_3 = (N_1+1)(N_2+1)$. Similar to the TD scheme, we can obtain the steady-state probabilities $\bm{\pi}$ of all states, which satisfy
	\begin{align}
		\begin{cases}
			\bm{\pi} P = \bm{\pi}, \\
			\bm{\pi} \bm{1}_{r_3} = 1.
		\end{cases}
	\end{align}
	Then we can obtain the steady-state probabilities $\bm{\pi}$ by
	\begin{align}
		\label{eq_pi_fd}
		\bm{\pi} = \bm{1}_{r_3}^{\mathrm{T}} \left(P - I_{r_3} + \bm{1}_{r_3\times r_3}\right)^{-1}. 
	\end{align}
	The steady-state probability for state $(q_2, q_1)$ is denoted by $\pi_{q_2,q_1}$. We can obtain the average length of the tandem queue given by
	\begin{align}
		\bar{L} = \sum_{q_1=0}^{N_1} \sum_{q_2=0}^{N_2} (q_1 + q_2)\pi_{q_2,q_1}.
	\end{align}
	Moreover, in addition to the peak AoII caused in the transmission queue, there exists an extra time elapse from the successful transmission of a reservation signal to the beginning of the next frame since we focus on the state at the beginning of each frame. This time elapse ranges from zero to $KT_2$. Thus, this time elapse is approximated by $\frac{KT_2}{2}$, which increases the peak AoII. The arrival rate of the combined packet is $\frac{\lambda}{K}$. Thus, the average peak AoII for the tandem queue is given by
	\begin{align}
		\label{eq_peak_aoii_FD}
		\ell = \frac{KT_2}{2} + \frac{K}{\lambda}\sum_{q_1=0}^{N_1} \sum_{q_2=0}^{N_2} (q_1 + q_2)\pi_{q_2,q_1}.
	\end{align}
	
	Intuitively, if more bandwidth are allocated for the reservation channel or the data channel, then the average peak AoII for the reservation queue or transmission queue is reduced. However, the total bandwidth for the two channels is constrained. To minimize the average peak AoII for the tandem queue, we optimize the bandwidth $w_1$ and $w_2$ for the reservation channel and data channel. Without loss of generality, we consider the total bandwidth constraint is normalized as $w_1 + w_2 = 1$.
	
	In order to assure the finite average peak AoII, we should stabilize the tandem queue. To achieve this goal, the transmission rate of the reservation channel and the data channel should be greater than the packet arrival rate. In the following, we provide the condition for finite peak AoII. First, for the reservation channel, the maximal transmission rate is given by
	\begin{align}
		\lim_{i\rightarrow\infty}q_i = e^{-1},
	\end{align}
	which means that at most $e^{-1}$ reservation signals on average can be successfully transmitted in the reservation channel. The transmission rate of the reservation channel is given by $e^{-1}w_1$. The arrival rate of the reservation signals to the reservation queue is given by $\frac{\lambda}{K}$. Thus, we have $e^{-1}w_1 > \frac{\lambda}{K}$. Second, for the data channel, the maximal transmission rate is given by $w_2 = 1 - w_1$. The packet arrival rate is $\lambda c$ since the size of each packet is $c$. It requires that $\lambda c < 1 - w_1$. Therefore, the bandwidth $w_1$ is constrained by
	\begin{align}
		\label{eq_range_w1}
		\frac{\lambda e}{K} < w_1 < 1 - \lambda c.
	\end{align}
	Furthermore, we can derive the feasible range of arrival rate $\lambda$ to attain finite peak AoII with a feasible $w_1$. Thus, the inequality $\frac{\lambda e}{K} < 1 - \lambda c$ should be satisfied. In other words, we have the following upper bound for $\lambda$ given by
	\begin{align}
		\label{eq_range_lambda_fd}
		\lambda < \frac{K}{e + Kc}.
	\end{align}
	
	When Eq. (\ref{eq_range_lambda_fd}) holds, the tandem queue is stable. In this case, we can obtain the average peak AoII from Eq. (\ref{eq_peak_aoii_FD}). Based on Eq. (\ref{eq_peak_aoii_FD}), we can optimize the bandwidth $w_1$ and $w_2$ that minimizes the average peak AoII under the constraint $w_1 + w_2 = 1$. To find the optimal bandwidth $w_1$, we represent the average peak AoII as a function of $w_1$, denoted by $\ell(w_1)$. Specifically, we can obtain the first derivative of $\ell(w_1)$ with respect to $w_1$ by
	\begin{align}
		\label{eq_derivative_ell}
		\frac{\mathrm{d}\ell(w_1)}{\mathrm{d} w_1} = \frac{Kc}{2(1-w_1)^2} + \frac{K}{\lambda} \sum_{q_1=0}^{N_1} \sum_{q_2=0}^{N_2}(q_1+q_2)\frac{\mathrm{d}\pi_{q_2,q_1}(w_1)}{\mathrm{d} w_1}.
	\end{align}
	Let $Q(w_1) = P-I_(r_3)+\bm{1}_{r_3\times r_3}$. Based on Eq. (\ref{eq_pi_fd}), the first derivative of $\bm{\pi}$ with respect to $w_1$ in Eq. (\ref{eq_derivative_ell}) is given by
	\begin{align}
		\frac{\mathrm{d}\bm{\pi}(w_1)}{\mathrm{d} w_1} &= \bm{1}_{r_3}^{\mathrm{T}} \frac{\mathrm{d}Q(w_1)^{-1}}{\mathrm{d} w_1}  \nonumber \\
		&= -\bm{1}_{r_3}^{\mathrm{T}} Q(w_1)^{-1} \frac{\mathrm{d}Q(w_1)}{\mathrm{d} w_1} Q(w_1)^{-1},
	\end{align}
	in which $\frac{\mathrm{d}Q(w_1)}{\mathrm{d} w_1}$ is the first derivative of $Q(w_1)$ with respect to $w_1$. Each element of $\frac{\mathrm{d}Q(w_1)}{\mathrm{d} w_1}$ is determined by $\frac{\mathrm{d}h_{y,z|q_1}(w_1)}{\mathrm{d} w_1}$. First, we derive the first derivative of $g_{y_1,\ldots,y_x}(w_1)$ with respect to $w_1$. Let $\bar{\lambda} = \frac{\lambda}{Kw_1}$ denote the average number of new reservation signals arriving at the reservation queue per timeslot. We can obtain that
	\begin{align}
		\frac{\mathrm{d}g_{y_1, \ldots, y_x}(w_1)}{\mathrm{d} w_1} = \frac{\bar{\lambda}^{y} e^{-\bar{\lambda}}}{w_1\prod_{i=1}^{x}y_i!} \left(\bar{\lambda}x - y\right),
	\end{align}
	in which $y$ is the number of new reservation signals given by $y = \sum_{i=1}^{x}y_i$. Next, we can derive the first derivative of $\sigma$. Although $\sigma$ is discontinuous with respect to $w_1$, we can observe that $h_{y,z|q_1}(w_1)$ is continuous with respect to $w_1$. Thus, we can use left derivative or right derivative for discontinuous points since the left derivative and right derivative of $\sigma$ are equal. The derivative of $\sigma$ is denoted by $-\frac{Kc}{(1-w_1)^2}$. Therefore, we can derive the first derivative of $h_{y,z|q_1}(w_1)$ with respect to $w_1$ given by Eq. (\ref{eq_h_w1}). 
	\begin{figure*}
		\begin{align}
			\label{eq_h_w1}
			&\frac{\mathrm{d}h_{y,z|q_1}(w_1)}{\mathrm{d} w_1} =\nonumber\\
			& (1 - \sigma) \sum_{\sum_{i=1}^{\overline{x}} y_i = y} \frac{\mathrm{d}g_{y_1, \ldots, y_{\overline{x}}}(w_1)}{\mathrm{d} w_1} \!\!\sum_{\sum_{i=1}^{\overline{x}} z_i = z}\!\! f_{z_1, \ldots, z_{\overline{x}} | q_1, y_1, \ldots, y_{\overline{x}}} + \sigma \sum_{\sum_{i=1}^{\underline{x}} y_i = y} \frac{\mathrm{d}g_{y_1, \ldots, y_{\underline{x}}}(w_1)}{\mathrm{d} w_1}\!\!\sum_{\sum_{i=1}^{\underline{x}} z_i = z}\!\! f_{z_1, \ldots, z_{\underline{x}} | q_1, y_1, \ldots, y_{\underline{x}}} \nonumber \\
			&- \frac{Kc}{(1-w_1)^2} \sum_{\sum_{i=1}^{\underline{x}} y_i = y} \!\!g_{y_1, \ldots, y_{\underline{x}}}(w_1) \!\!\sum_{\sum_{i=1}^{\underline{x}} z_i = z}\!\!\! f_{z_1, \ldots, z_{\underline{x}} | q_1, y_1, \ldots, y_{\underline{x}}} + \frac{Kc}{(1-w_1)^2} \sum_{\sum_{i=1}^{\overline{x}} y_i = y} \!\!g_{y_1, \ldots, y_{\overline{x}}}(w_1) \!\!\!\sum_{\sum_{i=1}^{\overline{x}} z_i = z}\!\! f_{z_1, \ldots, z_{\overline{x}} | q_1, y_1, \ldots, y_{\overline{x}}}.
		\end{align}
		\hrulefill
	\end{figure*}
	Thus, we can compute all elements of $\frac{\mathrm{d}Q(w_1)}{\mathrm{d} w_1}$ and $\frac{\mathrm{d}\ell(w_1)}{\mathrm{d} w_1}$.
	
	To find the optimal bandwidth $w_1$ that minimizes the average peak AoII $\ell$, we apply a binary searching algorithm in the interval $\left(\frac{\lambda e}{K}, 1 - \lambda c\right)$. Specifically, the binary searching algorithm is described as follows. Step 0, let $w_{1,1} = \frac{\lambda e}{K}$ and $w_{1,2} = 1 - \lambda c$. Step 1, Then find the point $w_{1,0} = \frac{1}{2}(w_{1,1} + w_{1,2})$ and compute $\left.\frac{\mathrm{d}\ell(w_1)}{\mathrm{d} w_1} \right| _{w_1 = w_{1,0}}$ at point $w_{1,0}$. Step 2, If $\left.\frac{\mathrm{d}\ell(w_1)}{\mathrm{d} w_1} \right| _{w_1 = w_{1,0}} < 0$, then let $w_{1,1} = w_{1,0}$, otherwise let $w_{1,2} = w{1,0}$. Step 3, If $w_{1,2} - w_{1,1} < \epsilon$ for a given $\epsilon$, then we can obtain the optimal bandwidth $w_1^* = \frac{1}{2}(w_{1,1} + w_{1,2})$, otherwise go back to step 1.

	\subsection{Dynamic Bandwidth Allocation}
	
	Both the TD and FD schemes presented in Sections III and IV are static multiplexing schemes, as they maintain a fixed structure for multiplexing reservation signals and data packets, regardless of the tandem queue's current state. To further improve the bandwidth efficiency, we present the XD scheme in this section. Specifically, the bandwidth $w_1$ for the reservation signals and $w_2$ for the data packets are modified dynamically based on the length of the reservation queue and the transmission queue. For the access trigger stage, we consider $K=1$, while we consider $N_{\mathrm{max}} = \infty$ for the transmission stage. Thus, after a reservation signal is successfully transmitted by a node, the node can transmit all data packets in the local buffer when the node begins to transmit. The reservation stage applies the Aloha algorithm. We consider $N_1 = M$ and $0 \leq N_2 \leq M$ so that each node can trigger a reservation with one data packet in the local buffer. Further reservation signals halt when $N_2$ nodes are in the transmission queue.

	The length of each timeslot is set to be 1. The bandwidth can be $w_1 > 0$ or $w_1 = 0$. If $w_1 > 0$, each reservation signal is sent within $\frac{1}{w_1}$ timeslots, which is referred to as the reservation interval. For simplicity, we can modify the bandwidth $w_1$ at the beginning of each timeslot. Particularly, when we set $w_1 = 0$, then all bandwidth are allocated for the data packets and no reservation signal is transmitted. Let $t_d$ denote the index of timeslot within a reservation interval. Since each node may have different numbers of data packets in the local buffer, we use $q_1$ and $q_2$ to denote the number of nodes in the reservation queue and in the transmission queue, respectively. We use $q_0$ to denote the total number of data packets in all nodes except for those packets in the node at the head of the transmission queue. We use $q_3$ to denote the number of data packets in the node at the head of the data queue. In addition, let $s$ denote the state of the collision resolution algorithm. Thus, the whole state of the tandem queue is denoted by $S = (t_d,q_0,q_1,q_2,q_3,s)$.
	
	Next, we present the state transition probability for a given bandwidth. The transition is considered in two steps. First,  nodes in the reservation queue and  at the head of transmission queue can transmit in the current timeslot. Second, the timeslot's end can see data packet arrivals at nodes, and transmission queue head changes due to node completion.  Specifically, in the first step, only the nodes at the head of transmission queue can transmit. When $w_1 = 0$, at most $\frac{1}{c}$ data packets can be transmitted in a timeslot. Thus, the number of data packets at the head of the transmission queue transits from $q_3$ to $q_3'$ given by
	\begin{align}
		q_3' = \left(q_3-\frac{1}{c}\right)^{+}.
	\end{align}
	The other states do not change until the end of the timeslot. When $w_1 > 0$, both the reservation signals and data packets are allocated with bandwidth for transmission. Since a reservation interval lasts for $\frac{1}{w_1}$ timeslots, A reservation interval spans $\frac{1}{w_1}$ timeslots, during which bandwidth cannot be altered. The index $t_d$ increases from one to $\frac{1}{w_1}$ to denote the progress of a reservation interval. When $t_d < \frac{1}{w_1}$, the index $t_d$ increases by one while $q_1$, $q_2$, and $s$ does not change in the middle of a reservation interval. When $t_d = \frac{1}{w_1}$, a reservation interval is completed and the index changes to $t_d' = 1$. The state of the resolution algorithm $s$ transits based on the transition probability matrix of the collision resolution algorithm $X_0$, $X_1$, and $Y_n$. Specifically, these states transit from $(q_1,q_2,s_i)$ to $(q_1,q_2,s_j)$ with probability $X_{0,ij}$, or to $(q_1-1,q_2+1,s_j)$ with probability $X_{1,ij}$. For the transmission queue, at most $\frac{w_2}{c}$ data packets can be transmitted by the node at the head of the transmission queue. Thus, the state $q_3$ transits to $q_3' = \left(q_3-\frac{w_2}{c}\right)^{+}$. Therefore, all state transition from state $S = (t_d,q_0,q_1,q_2,q_3,s)$ to state $S' = (t_d',q_0',q_1',q_2',q_3',s')$ in the first step is given by
	\begin{align}
		\label{eq_transition_p1_dynamic}
		&P_{1,S,S',w_1} \nonumber\\
		&= \begin{cases}
			1, & w_1 = 0, \\
			\sum_{k=1}^{r_0}X_{0,ik}Y_{q_1,kj}, & t_d = \frac{1}{w_1}, t_d' = 1, s = s_i, s' = s_j, \\
			\sum_{k=1}^{r_0}X_{1,ik}Y_{q_1,kj}, & t_d = \frac{1}{w_1}, t_d' = 1, q_2' = q_2 + 1, \\
			& \quad q_1' = q_1 - 1, s = s_i, s' = s_j, \\
			1, & w_1 > 0, t_d < \frac{1}{w_1}, t_d' = t_d + 1, \\
			0, & \mathrm{otherwise}.
		\end{cases}
	\end{align}
	All but the last case in Eq. (\ref{eq_transition_p1_dynamic}) satisfies the conditions that $q_3' = \left(q_3-\frac{1-w_1}{c}\right)^{+}$ and other states remain unchanged if not specified.
	
	The second step does not depend on the bandwidth. In the second step of transition, the state transits from $S' = (t_d',q_0',q_1',q_2',q_3',s')$ to state $S'' = (t_d'',q_0'', q_1'', q_2'', q_3'', s'')$. If the number of data packets in the node at the head of the transmission queue $q_3'$ is greater than zero, then the state $q_2'$ and $q_3'$ does not change. However, if $q_3' = 0$, then data packets of the next node in the transmission queue moves to the head starting transmission in the next timeslot. Since the packet arrivals are independent among all nodes, we assume that each arrived packet is regarded to be hold by all nodes with equal probability for simplicity. Thus, when there are $q_0'$ packets in total and $q_1' + q_2'$ nodes in the tandem queue, if $q_2'>0$, then the probability that the next node with $i$ data packets moves to the head of the transmission queue is given by
	\begin{align}
		u_{1,q_0',q_1',q_2',i} =& \binom{q_0' - q_1' - q_2'}{i - 1} \left(\frac{1}{q_1'+q_2'}\right)^{i-1} \nonumber\\
		&\times \left(1-\frac{1}{q_1'+q_2'}\right)^{q_0' - q_1' - q_2' - i}.
	\end{align}
	If $q_2'=0$, then there are no more packets in the transmission queue. In this case, $q_3'' = 0$. Since the next node in the transmission queue starts transmission in the next timeslot, we can subtract those packets from the total number of packets remained in the tandem queue. The state $q_0'$ decreases by $i$ and $q_2'$ decreases by one becoming $q_2'' = q_2' - 1$.
	
	Moreover, new packets may arrive at nodes. The total number of data packets at all nodes $q_0'$ increases by the number of newly arrived packets. According to the Poisson arrival, there are $y$ newly arrived packets with probability $a_{y}$. Each packet arrive at each node with equal probability. If a packet arrives at a node that does not have packets in the buffer currently, then the node enters the reservation queue to send reservation signals. The state $q_1'$ transits to $q_1' + j$ if there are $j$ new nodes entering the reservation queue. Since there are $M$ nodes in total with $q_1' + q_2'$ nodes that already have at least one packets in the buffer, the probability that there are $j$ new reservation signals is given by
	\begin{align}
		u_{2,M,q_1',q_2',y,j} = \frac{\binom{M-q_1'-q_2'+j-1}{M-q_1'-q_2'-1}\binom{q_1'+q_2'+y-j-1}{q_1'+q_2'-1}}%
		{\binom{M+y-1}{M-1}}.
	\end{align}
	In the second step, state $t_d$ and $s$ does not change. Therefore, the state transition probability from state $S' = (t_d',q_0',q_1',q_2',q_3',s')$ to state $S'' = (t_d'',q_0'', q_1'', q_2'', q_3'', s'')$ is given by Eq. (\ref{eq_transition_p2_dynamic}). 
	\begin{figure*}
		\normalsize
		\begin{align}
			\label{eq_transition_p2_dynamic}
			&P_{2,S',S''} = \begin{cases}
				a_y u_{1,q_0',q_1',q_2',i} u_{2,M,q_1',q_2',y,j}, & q_3' = 0, q_2' > 0, q_3'' = i, q_0'' = q_0' + y + i, q_1'' = q_1' + j, q_2'' = q_2' - 1, \\
				a_y u_{2,M,q_1',q_2',y,j}, & q_2' = 0, q_0'' = q_0' + y, q_1'' = q_1' + j, \\
				a_y u_{2,M,q_1',q_2',y,j}, & q_3' > 0, q_0'' = q_0' + y, q_1'' = q_1' + j, \\
				0, & \mathrm{otherwise}.
			\end{cases}
		\end{align}
		\hrulefill
	\end{figure*}
	
	The overall transition probability from state $S$ at the beginning of a timeslot to state $S''$ at the beginning of the next timeslot is obtained by multiplying the transition probability of two steps given in Eqs. (\ref{eq_transition_p1_dynamic}) and (\ref{eq_transition_p2_dynamic}). We next show that for each pair of states $S$ and $S''$, there exists at most one intermediate state $S'$. In other words, $S'$ is determined by $S$ and $S''$. Specifically, state $S'$ satisfies that $t_d' = t_d''$, $q_0' = q_0$, $q_3' = \left(q_3-\frac{1-w_1}{c}\right)^{+}$, $s' = s''$. If $q_3'' = 0$, then $q_2'=0$ since there is no nodes in the transmission queue. If $q_3'' > 0$ and $q_3' = 0$, then $q_2' = q_2'' + 1$ since the node at the head of the transmission queue finishes transmission in the current timeslot. If $q_3'' > 0$ and $q_3' = q_3''$, then $q_2' = q_2''$. For $q_1'$, it satisfies that $q_1' = q_1 + q_2 - q_2'$. Therefore, we can obtain a unique $S'$ as the intermediate state between two states $S$ and $S''$ in the consecutive timeslots. The transition probability from state $S$ to $S''$ is given by
	\begin{align}
		P_{3,S,S'',w_1} = P_{1,S,S',w_1} P_{2,S',S''}.
	\end{align}
	
	Our goal is to minimize the average peak AoII, which is equivalent to minimizing the average queue length of the tandem queue according to Eq. (\ref{eq_average_delay}). Therefore, the cost function for each timeslot is the total number of data packets in the tandem queue. Specifically, we define the cost function as $C_{S,w_1}$ given by
	\begin{align}
		C_{S,w_1} = q_0 + \left(q_3-\frac{1-w_1}{c}\right)^{+},
	\end{align}
	which represents the number of packets left in the tandem queue after the transmission in the current timeslot. The average cost per timeslot is equal to the average length $\bar{L}$ of the tandem queue.
	
	We consider that the available actions are $\frac{1}{i}$ for $i = 1, \ldots, i_{\mathrm{max}}$ to make the reservation interval contain $i$ timeslots. The action space $\mathcal{A}_{S}$ depends on the state $S$. Since we modify the bandwidth $w_1$ only at the beginning of each reservation interval, the current bandwidth is also included in the system state. Thus, the state is given by $(t_d,q_0,q_1,q_2,q_3,s,w_1)$, in which $w_1$ is only used to constrain the available action in the current timeslot when $t_d > 1$. We can formulate an infinite horizon MDP for the dynamic bandwidth scheme with the state space $\mathcal{S}$ of states $(t_d,q_0,q_1,q_2,q_3,s,w_1)$, action space $\mathcal{A}_{S}$, transition probabilities $P_{3,S,S',w_1}$, and the cost function $C_{S,w_1}$. Let $v(S)$ denote the value function of state $S$ for $S \in \mathcal{S}$. The optimal value function for the MDP satisfies the optimality equation, which is also known as Bellman Equation, given by
	\begin{align}
		\label{eq_optimality_dynamic}
		\bar{L} + v(S) = \min_{w_1 \in \mathcal{A}}\left\{C_{S,w_1} + \sum_{S' \in \mathcal{S}}P_{3,S,S',w_1}v(S')\right\}.
	\end{align}
	By solving the optimality equation (\ref{eq_optimality_dynamic}), we can obtain the optimal value function, which gives the optimal dynamic bandwidth allocation policy minimizing the average queue length.

	\begin{algorithm}[!t]
		\caption{Value Iteration Algorithm}
		Initialize $v_0\leftarrow0, \varepsilon>0, i\leftarrow 0$
		\begin{algorithmic}
			\label{algorithm_value_iteration}
			\REPEAT
			\STATE $i\leftarrow i+1$.
			\FORALL{$S\in\mathcal{S}$}
			\STATE $v_i(S) = \sum\limits_{w_1 \in \mathcal{A}_{S}}\left\{ C_{S,w_1} + \sum\limits_{S' \in \mathcal{S}} P_{3,S,S',w_1} v_{i-1}(S')\right\}$
			\ENDFOR
			\UNTIL{$\varphi(v_i-v_{i-1})<\varepsilon$}
			\FORALL{$S\in\mathcal{S}$}
			\STATE $w_1(S) = \mathop{\arg\min}\limits_{w_1 \in \mathcal{A}_{S}} \left\{C_{S,w_1} + \sum\limits_{S' \in \mathcal{S}} P_{3,S,S',w_1} v_i(S')\right\}$
			\ENDFOR
		\end{algorithmic}
	\end{algorithm}
	
	We apply the value iteration algorithm to solve the MDP \cite[Chapter 8]{MDP_book}. The optimal value function $v(S)$ is computed iteratively, as shown in Algorithm \ref{algorithm_value_iteration}.
	Through iteration computation of $v_i(S)$ for all states $S \in \mathcal{S}$, the value function converges to the optimal one. The stopping rule of the iteration is given by $\varphi(v_i - v_{i-1}) < \varepsilon$, in which $v_i$ represents all the values $v_i{S}$ for state $S \in \mathcal{S}$ and $\varphi$ is defined as
	\begin{align}
		\varphi(v_i - v_{i-1}) =& \max_{S \in \mathcal{S}}\left\{v_i(S) - v_{i-1}(S)\right\} \nonumber\\
		&- \min_{S \in \mathcal{S}}\left\{v_i(S) - v_{i-1}(S)\right\}.
	\end{align}
	Based on the optimal value function $v_i$ after the iteration converges, we can obtain an $\varepsilon$-optimal dynamic bandwidth allocation policy, given by
	\begin{align}
		\label{eq_policy_dynamic}
		w_1(S) = \mathop{\arg\min}_{w_1 \in \mathcal{A}_{S}} \left\{C_{S,w_1} + \sum_{S' \in \mathcal{S}} P_{3,S,S',w_1} v_i(S')\right\}.
	\end{align}
	This policy is $\varepsilon$-optimal because the gap of average cost between the policy and the theoretically optimal one is less than $\epsilon$. Moreover, we can obtain an approximation of the average queue length attained by the dynamic bandwidth allocation policy $w_1(S)$ given in Eq. (\ref{eq_policy_dynamic}), given by
	\begin{align}
		\bar{L}_{\varepsilon} = \frac{1}{2}\!\left[\max_{S \in \mathcal{S}}\left\{v_i(S) - v_{i-1}(S)\right\} + \min_{S \in \mathcal{S}}\left\{v_i(S) - v_{i-1}(S)\right\}\right]\!.
	\end{align}

	\section{Mean-Field Approximations for Massive Users}
	
	As the number of nodes in the system increases, the dimension of the presented framework can be prohibitively large. To reduce the computational complexity, we formulate the Markov chain model for a single node based on mean-field approximations. According to the multiple access framework, a node cannot successfully send a reservation signal if other nodes are sending reservation signals or transmitting data packets. Such impact of all nodes' reservation and transmission on a single node can be approximated by a simple statistical effect when the number of nodes $M$ is large enough. The dependence among users vanishes as $M\rightarrow\infty$. Specifically, we assume that each time the node sends a reservation signal, the reservation signal is successfully transmitted with a constant probability that is independent with the states of other nodes. A fixed-point iteration-based method is presented to obtain the steady-state probability as well as the average AoII and peak AoII. 
	
	We study the mean-field approximation for dynamic bandwidth allocation scheme in this section. The mean-field  approximations can also be applied to approximate the average AoII and peak AoII for other schemes. We focus on a single node and denote the state of the node by $(q,t_d)$ when the node has $q$ data packets in the local buffer and the current timeslot is the $t_d$th timeslot for transmission of the current data packet. When $t_d = 0$, it means that the node has not successfully transmitted the reservation signal. Consider that the constraint of the local buffer size is $N$, thus the node has at most $N$ data packets. The finite buffer size $N$ enables the formulation of Markov chain with finite state space. Note that when the constraint $N$ is large enough, the packet loss rate is approximately zero. The state space is given by
	\begin{align}
		\mathcal{S} = \{(0,0)\} \cup \{(q,t_d) | 1\leq q \leq N, 0 \leq t_d \leq c\}.
	\end{align}
	The state space size is $r_4 = N(c+1)+1$. Let $\bm{\pi}$ denote the steady-state probability of all states, while $\pi_{q,t_d}$ denote the steady-state probability of state $(q,t_d)$.
	
	\begin{figure}[!t]
		\centering
		\includegraphics[width=1\linewidth]{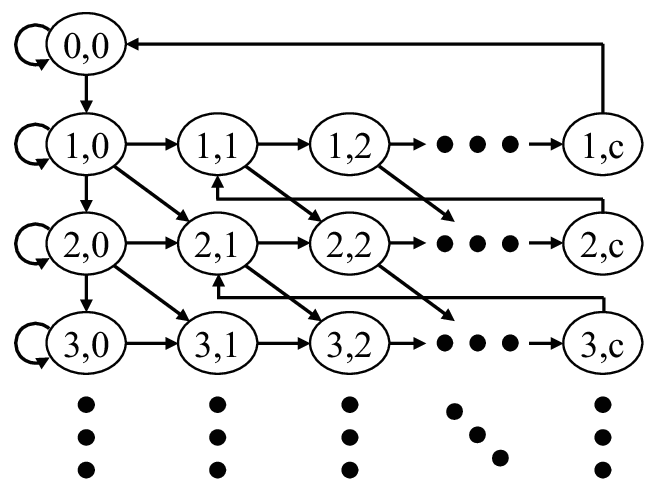}\\
		\caption{The transition of the Markov chain.}
		\label{mean_field_chain}
	\end{figure}
	
	Consider $N_2 = 0$, thus the node begins transmission of data packets right after the node successfully sends a reservation signal. Specifically, when the state of the node is $(q,0)$, then the node may successfully send a reservation signal with a probability, which we assume to be identical and independent in all timeslots. Then the node enters state $(q,1)$ and the state changes as $(q,1)$, $(q,2), \ldots, (q,c)$ in each subsequent timeslot. Then the node begins transmission of the next data packet thus the state transits from $(q,c)$ to $(q-1,1)$. The transmission process continues until the node finishes transmission of all data packets and enters state $(0,0)$. The transition of the Markov chain is shown in Fig. \ref{mean_field_chain}. In addition, new data packets may arrive at the end of each timeslot. Since the arrival rate for each node is $\bar{\lambda} = \frac{\lambda}{M}$ that approaches zero as the number of nodes $M$ increases, we assume that at most one data packet can arrive in each timeslot. Specifically, the probability that a new data packet arrive is given by $\bar{\lambda}$. 
	
	Next, we justify the probability of successfully transmitting a reservation signal in a timeslot. The node can only send reservation signals when no other node is transmitting data packets. Let $\bar{h}$ denote the expected number of data packets to be transmitted when a node successfully sends a reservation signal. Let $\gamma$ denote the throughput of the reservation signal, which represents the probability that a reservation signal of any node is successfully transmitted in a timeslot of reservation. Here we consider Aloha scheme for the reservation stage thus we have $\gamma = e^{-1}$. The number of timeslots used for reservation $k_1$ and the number of timeslots used for transmission $k_2$ satisfy that 
	\begin{align}
		k_1  \gamma  \bar{h}  c = k_2,
	\end{align}
	where $k_1 \gamma$ is the expected number of successfully transmitted reservation signals. For each successful reservation, the node transmits $\bar{h}$ data packets with total length $\bar{h} c$ in expectation, which consumes $k_2$ timeslots. Thus, the ratio of timeslots that used for reservation is given by
	\begin{align}
		\eta &= \frac{k_1}{k_1 + k_2} \nonumber\\
		&= \frac{1}{1 + \gamma \bar{h} c}.
	\end{align}
	We assume in each timeslot, the probability that there exists another node transmitting data packets is $1 - \eta$, which is independent and idential among timeslots. When the current timeslot is used for reservation with probability $\eta$, a reservation signal is successfully transmitted with probability $\gamma$. The probability that a node has data packets in the local buffer is given by $1 - \pi_{0,0}$. Thus, the expected number of nodes in the reservation stage is $M(1 - \pi_{0,0})$. Because of the symmetry among all nodes, the targeted node successfully sends a reservation signal with probability given by
	\begin{align}
		\label{eq_alpha}
		\alpha = \frac{\eta \gamma}{M (1 - \pi_{0,0})}.
	\end{align}
	Therefore, for each state $(q,0)$ where $q > 0$, the transition probability to the next state $(q',t_d')$ is given by
	\begin{align}
		\label{transition_meanfield_1}
		p_{(q,0),(q',t_d')} = \begin{cases}
			\alpha \bar{\lambda}, & q' = q + 1, t_d' = 1, \\
			\alpha (1 - \bar{\lambda}), & q' = q, t_d' = 1, \\
			(1 - \alpha) \bar{\lambda}, & q' = q + 1, t_d' = 0, \\
			(1 - \alpha) (1 - \bar{\lambda}), & q' = q, t_d' = 0. 
		\end{cases}
	\end{align}
	In other cases of $q'$ and $t_d'$, the transition probability is zero. Note that if $q' > N$, then the next state just becomes $q' = N$ and the corresponding transition probability is accumulated in the transition probability for $(N,t_d')$. For state $(0,0)$, it transits to state $(1,0)$ with probability $\bar{\lambda}$ while state $(0,0)$ with probability $1 - \bar{\lambda}$. Thus, we have 
	\begin{align}
		\label{transition_meanfield_2}
		p_{(0,0), (q',0)} = \begin{cases}
			\bar{\lambda}, & q' = 1, \\
			1 - \bar{\lambda}, & q' = 0. 
		\end{cases}
	\end{align}
	Moreover, when $t_d > 0$, the node continues transmits data packets in the current timeslot, thus the transition probability to the next state $(q', t_d')$ is given by
	\begin{align}
		\label{transition_meanfield_3}
		p_{(q,t_d),(q',t_d')} = \begin{cases}
			\bar{\lambda}, & t_d < c, q' = q + 1, t_d' = t_d + 1, \\
			\bar{\lambda}, & t_d = c, q' = q, t_d' = 1, \\
			1 - \bar{\lambda}, & t_d < c, q' = q, t_d' = t_d + 1, \\
			1 - \bar{\lambda}, & t_d = c, q' = q - 1, t_d' = 1.
		\end{cases}
	\end{align}
	All transition probabilities defined by Eqs. (\ref{transition_meanfield_1})-(\ref{transition_meanfield_3}) constitute the transition matrix $P$ for a single node with mean-field approximation. Next, we present a fixed-point iteration-based algorithm to find the steady-state probabilities $\bm{\pi}$. The transition matrix $P$ is affected by the value of $\bar{h}$, which is unknown yet. However, $\bar{h}$ is determined by the steady-state probabilities $\bm{\pi}$. The reservation action of the node is independent of the number of data packets in the local buffer. Thus, we consider that the distribution of the number of data packets when the node successfully sends a reservation signal is determined by the distribution $\bm{\pi}$ conditioned on that the number of data packets $q>0$. Specifically, $\bar{h}$ is given by
	\begin{align}
		\label{eq_bar_h}
		\bar{h} = \frac{1}{1-\pi_{0,0}}\sum_{q=1}^{N}q\sum_{t_d=0}^{c}\pi_{q,t_d}.
	\end{align}
	Therefore, we can compute the steady-state probabilities $\bm{\pi}$, the expected number of data packets transmitted in a successful access $\bar{h}$, and the successful reservation probability $\alpha$ through an iterative algorithm. First, we randomly initialize the value of $\bar{h}$ and $\bm{\pi}$. Compute $\alpha$ according to Eq. (\ref{eq_alpha}). Then in each iteration, we can derive the transition matrix of the Markov chain $P$, based on which we can compute the steady-state probabilities by
	\begin{align}
		\bm{\pi} = \bm{1}_{r_4}^{\mathrm{T}} \left(P - I_{r_4} + \bm{1}_{r_4\times r_4}\right)^{-1}. 
	\end{align}
	Also we can find new values of $\bar{h}$ and $\alpha$ according to Eqs. (\ref{eq_bar_h}) and (\ref{eq_alpha}). After a number of iterations, the value of $\bm{\pi}$ converges. Thus, we can obtain the steady-state probabilities with the mean-field approximation. The average peak AoII is given by
	\begin{align}
		\ell = \frac{1}{\bar{\lambda}}\sum_{q=1}^{N}q\sum_{t_d=0}^{c}\pi_{q,t_d}.
	\end{align}
	As for the AoII-oriented scenario, when the node only keeps the newest data packet, we formulate a Markov chain for a single node. Let $s$ denote the state of the node. When $s = 0$, it represents that the node has no data packets in the local buffer. When the node has data packet in the local buffer, the state $s>0$ is defined by the age of the data packet. When the node successfully transmits a reservation signal, then the node transmits the data packet in the next $c$ timeslots. Thus, the state becomes $-c$ after the node successfully sends a reservation signal. Then the state increases by one in each timeslot to represent the transmission procedure of the data packet. Thus, the state space is given by
	\begin{align}
		\mathcal{S} = \{-c, \ldots, 0, 1, \ldots\},
	\end{align}
	Let $\pi_s$ denote the steady-state probability of state $s \in \mathcal{S}$. 
	
	Next, we present the transition probability of the Markov chain. When the node is in state $s=0$, then the state transits to $s=1$ when a new data packet is generated with probability $\bar{\lambda}$. When $s>0$, then the state increases by one in each timeslot until a reservation signal is successfully transmitted with probability $\alpha$ while the state transits to $s = -c$. The probability $\alpha$ is given by
	\begin{align}
		\alpha = \frac{\eta \gamma}{M (1 - \sum_{s=-c}^{0}\pi_s)}.
	\end{align}
	The expected number of data packets transmitted is $\bar{h} = 1$ in this AoII-oriented scenario. When $s < 0$, then the node is transmitting the data packet hence the state increases by one in each timeslot until the state becomes $s=0$. Therefore, the transition probability from state $s$ to state $s'$ is given by
	\begin{align}
		p_{s,s'} = \begin{cases}
			\bar{\lambda}, & s = 0, s' = 1, \\
			1 - \bar{\lambda}, & s = 0, s' = 0, \\
			\alpha, & s > 0, s' = -c, \\
			1 - \alpha, & s > 0, s' = s + 1, \\
			1, & s < 0, s' = s + 1,\\
			0, & \mathrm{otherwise}.
		\end{cases}
	\end{align}
	
	Based on the transition probability, we can obtain the transition matrix $P$ for the Markov chain. We can obtain the steady-state probabilities $\bm{\pi}$ in closed form. Through each access procedure, the state of the device changes from $s = 1,\ldots$ to $s = -c, \ldots, 0$. We can find that each time the state becomes $s=1$, then the state must go through $-c, \ldots, -1$ eventually. Thus the steady-state probabilities are equal for state $s\in \{-c, \ldots, -1, 1\}$. In addition, for state $s > 1$, the state in the previous timeslot must be $s-1$. According to the transition probability, for $s>1$, we have 
	\begin{align}
		\pi_{s} &= \pi_{s-1} (1 - \alpha)\nonumber\\
		&= \pi_1 (1 - \alpha)^{s - 1}.
	\end{align}
	For $s=1$, the state in the previous timeslot must be $0$. Thus, we have 
	\begin{align}
		\pi_0 = \frac{1}{\bar{\lambda}} \pi_1.
	\end{align}
	Thus, we can compute the steady-state probabilities $\pi_1$ as follows
	\begin{align}
		\label{eq_mean_field_pi1}
		\pi_1 &= \frac{1}{\frac{1}{\alpha} + c + \frac{1}{\bar{\lambda}}}.
	\end{align}
	Substitute Eq. (\ref{eq_alpha}) into Eq. (\ref{eq_mean_field_pi1}), we can obtain that the steady-state probability $\pi_1$ is a solution to the following quadratic equation.
	\begin{align}
		-M\left(\frac{1}{\bar{\lambda}} + c\right) \pi_1^2 + \left(M + \eta \gamma c + \frac{\eta \gamma}{\bar{\lambda}}\right) \pi_1 - \eta \gamma = 0.
	\end{align}
	Thus, we can obtain the closed-form steady-state probabilities. Let $a = -M\left(\frac{1}{\bar{\lambda}} + c\right)$, $b = \left(M + \eta \gamma c + \frac{\eta \gamma}{\bar{\lambda}}\right)$, $d = \eta \gamma$. The steady-state probability $\pi_1$ is given by
	\begin{align}
		\pi_1 = \frac{-b + \sqrt{b^2 - 4ad}}{2a}.
	\end{align}
	
	When the state $s < 0$, the age of the data packet is $x + s + c + 1$, where $x$ is the age when the node successfully sends the reservation signal. The expectation of $x$ is given by
	\begin{align}
		\bar{x} = \frac{\sum_{s=1}^{\infty}s\pi_{s}}{\sum_{s=1}^{\infty}\pi_s}.
	\end{align}
	After the value of $\bm{\pi}$ converges, the average AoII is given by
	\begin{align}
		\bar{\Delta} &= \sum_{s=1}^{\infty}s\pi_{s} + \sum_{s=-c}^{-1}(\bar{x} + s + c + 1)\pi_{s} \nonumber\\
		&= \left(\frac{1}{\alpha^2} + \frac{1}{\alpha} + \frac{c(c+1)}{2}\right) \pi_1.
	\end{align}

	\section{Simulation Results}
	
	In this section, we evaluate the various schemes for the multiple access network based on the unified framework. For the AoII oriented scenario, the polling scheme and Aloha scheme are adopted for the reservation stage. For the average peak-AoII oriented scenario, Aloha is adopted for the reservation stage in FD and XD schemes. The tree splitting algorithm is adopted for the TD scheme. The maximum layers of split in the tree splitting algorithm is $R = 3$. The size of each data packet is $c = 3$. The access trigger stage is set as $K=1$ for the TD and XD schemes. The transmission constraint is set as $N_{\mathrm{max}} = 1$, $N_{\mathrm{max}} = K$, and $N_{\mathrm{max}} = \infty$ for the TD, FD, and XD schemes, respectively.
	
	\begin{figure}
		\centering
		\includegraphics[width=1\linewidth]{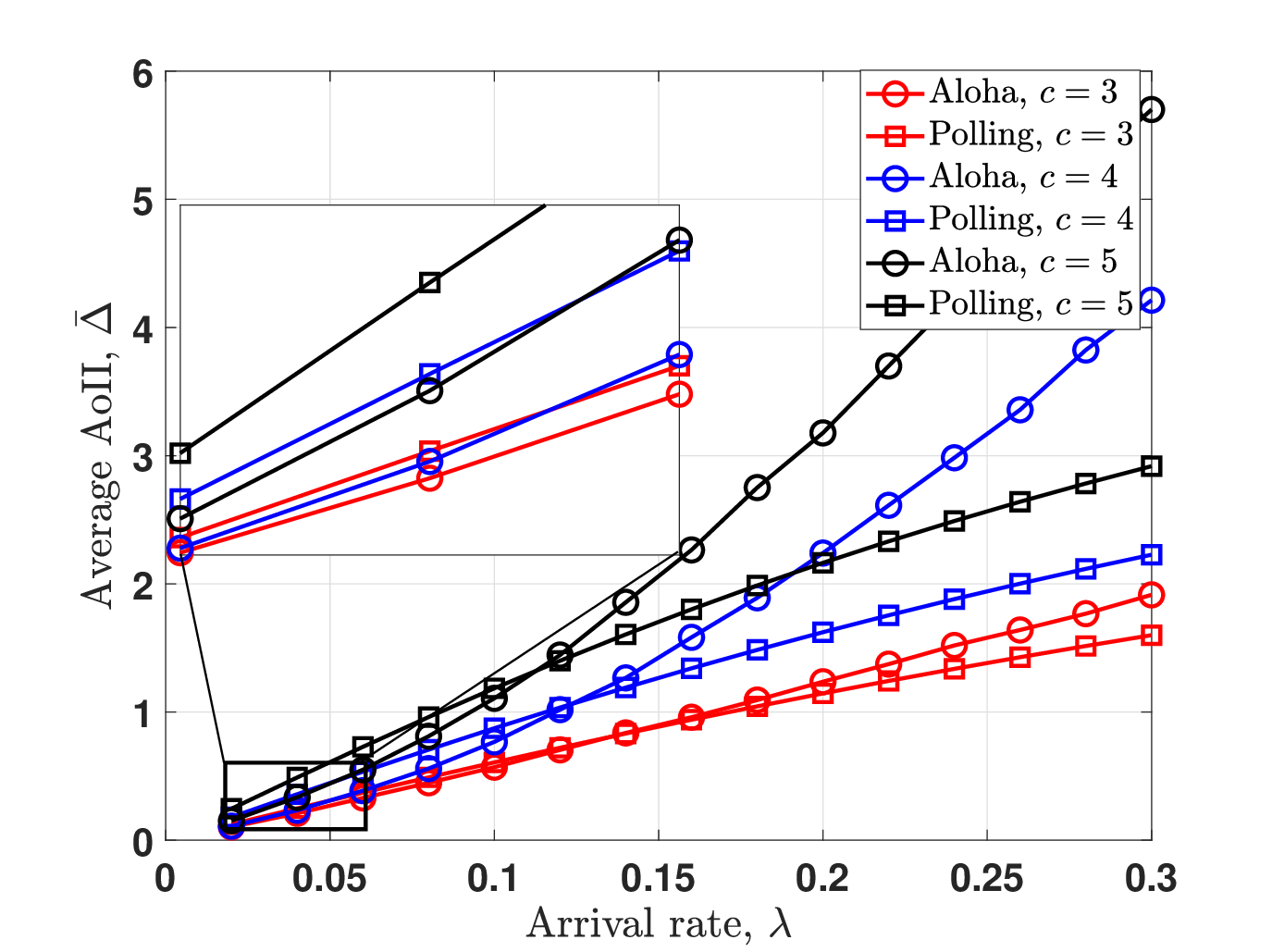}\\
		\caption{The average AoII $\bar{\Delta}$ versus the arrival rate $\lambda$ under the polling scheme and the Aloha scheme.}
		\label{fig_aoii_vs_arrival_rate}
	\end{figure}
	
	For the AoII oriented scenario, the average AoII $\bar{\Delta}$ versus the arrival rate $\lambda$ under both the polling scheme and the Aloha scheme are shown in Fig. \ref{fig_aoii_vs_arrival_rate}.We consider the number of transmitter nodes is $M=2$ and the constraint of AoII is $N = 10$. The size of each data packet is $c = 3$, $c = 4$, and $c = 5$. When the arrival rate is relatively low, the random access scheme achieves a lower average AoII. However, as the arrival rate increases, the collision among nodes under the random access scheme causes increasingly significant waste of resources, hence reducing the average AoII significantly. When the arrival rate is relatively high, the polling scheme outperforms the random access scheme under the heavy traffic. 
	
	\begin{figure}
		\centering
		\includegraphics[width=1\linewidth]{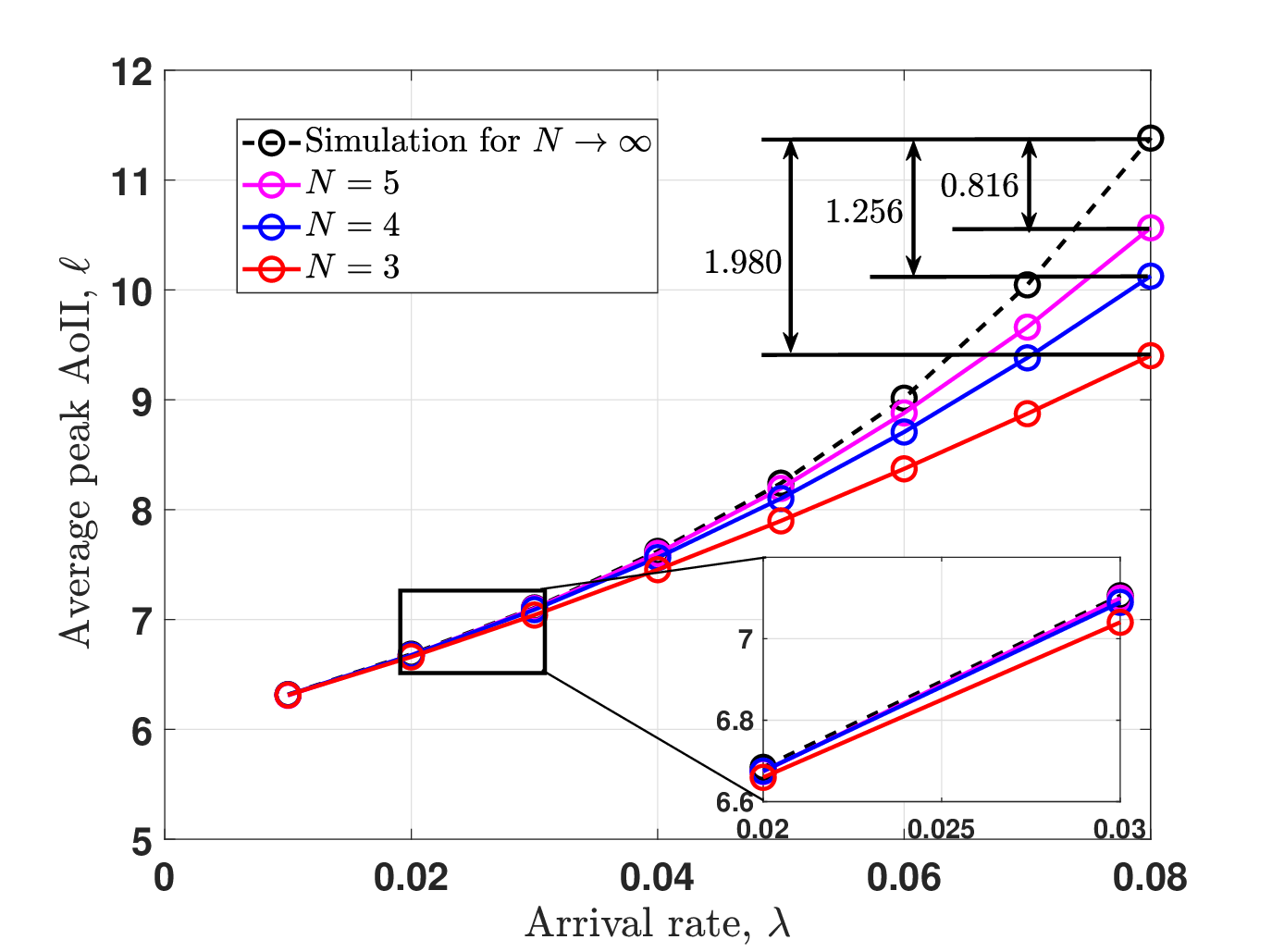}\\
		\caption{The average peak AoII $\ell$ versus the arrival rate $\lambda$ under TD scheme.}
		\label{fig_TD_delay_vs_arrival_rate}
	\end{figure}
	
	Next, for the average peak-AoII oriented scenario, we evaluate the reservation based random access scheme. For the TD scheme with the tree splitting algorithm, the average peak AoII $\ell$ versus the arrival rate $\lambda$ is shown in Fig. \ref{fig_TD_delay_vs_arrival_rate}. The structure of the frame is $Z_1 = 3$ and $Z_2 = 1$. We consider infinite-node model for this scheme since each packet requires an independent reservation. The constraint of the queue length is given by $N=3$, $N=4$, $N = 5$, and $N \rightarrow \infty$. The presented tandem queue model is applied for finite queue length while the simulation result is provided with infinite queue length constraint. The packet loss rate approaches zero when the constraint of queue length $N$ is large enough. When the arrival rate $\lambda$ is small ($\lambda \leq 0.02$), all these cases have the same average peak AoII performance and zero packet loss rate. As the arrival rate $\lambda$ increases, the gap of average peak AoII for different constraints of queue length increases.
	
	\begin{figure}
		\centering
		\includegraphics[width=1\linewidth]{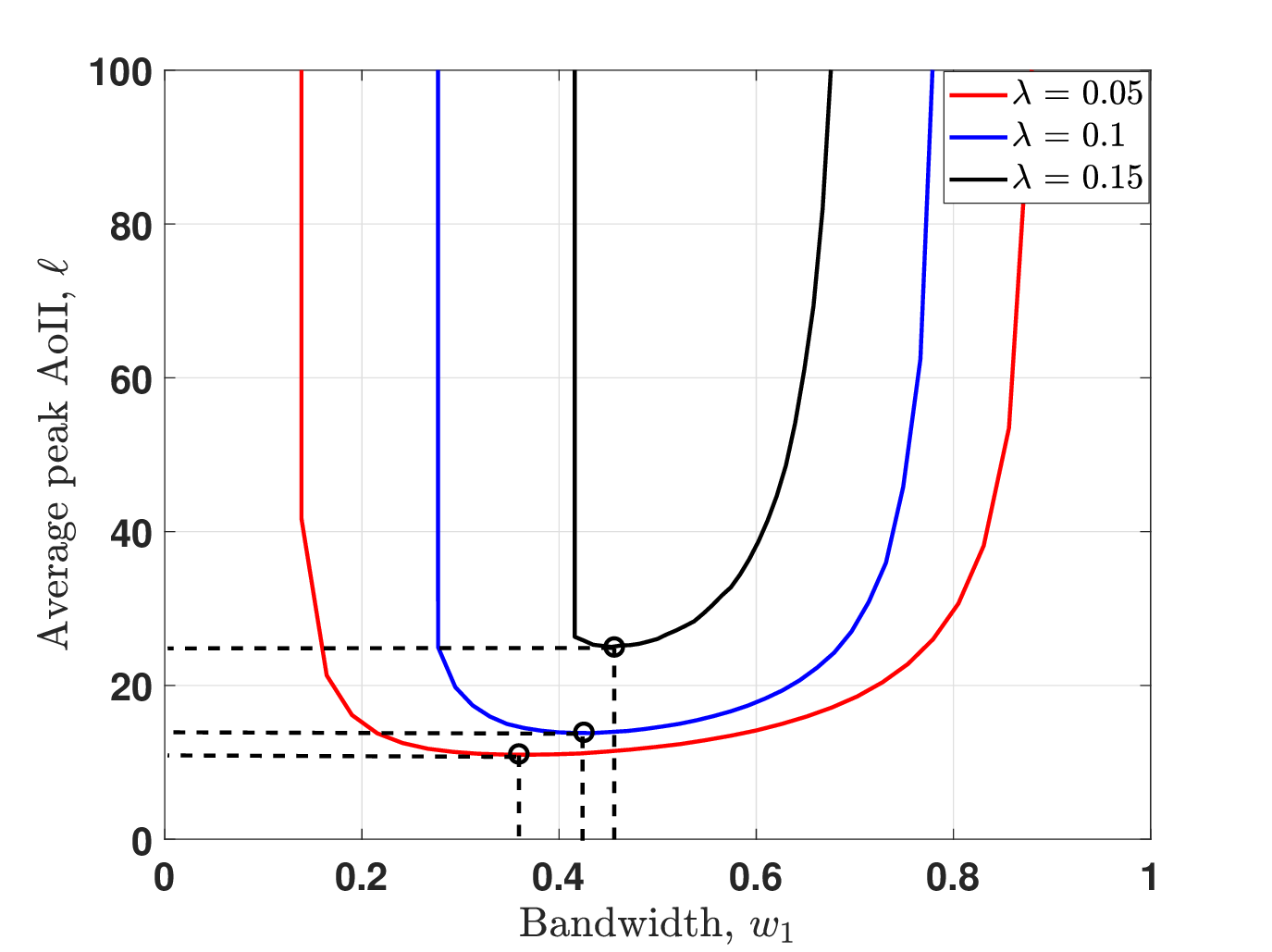}\\
		\caption{The average peak AoII $\ell$ versus the bandwidth $w_1$ under FD scheme.}
		\label{fig_FD_delay_vs_w1}
	\end{figure}
	
	For the FD scheme, we show the average peak AoII $\ell$ versus the bandwidth $w_1$ in Fig. \ref{fig_FD_delay_vs_w1}. The threshold for triggering reservation is $K = 1$. The arrival rate is given by $\lambda = 0.05$, $\lambda = 0.1$, and $\lambda = 0.15$. Since all data packets are independent with each other in reservation and transmission, the number of nodes does not affect the average in the reservation queue and transmission queue. The number of nodes only affects the peak AoII $\ell_0$ in the access trigger stage, which is zero when $K = 1$. Note that the average peak AoII becomes extremely high when the bandwidth for the reservation channel is too small or too large. Thus, we can find the optimal bandwidth allocation scheme that minimizes the average peak AoII. The optimal bandwidth $w_1$ is different with different arrival rate.
	
	Furthermore, we show the tradeoff between the average peak AoII and the supported arrival rate for the FD scheme in Fig. \ref{fig_FD_delay_vs_arrival_rate}. The number of nodes is given by $M = 5$. When $K > 1$, the average peak AoII first decreases then increases with the arrival rate since the time for waiting packet combining can be high with a low arrival rate. If the access trigger stage requires each node to wait more data packets before entering the reservation queue, then the maximum arrival rate that is supported by this scheme is improved since less reservation signals are transmitted. However, the average peak AoII can increase rapidly with a larger $K$ at a large arrival rate. It provides much insight that we can transmit different number of data packets for different scenarios in an adaptive way, which indicates the presented XD scheme.
	
	\begin{figure}
		\centering
		\includegraphics[width=1\linewidth]{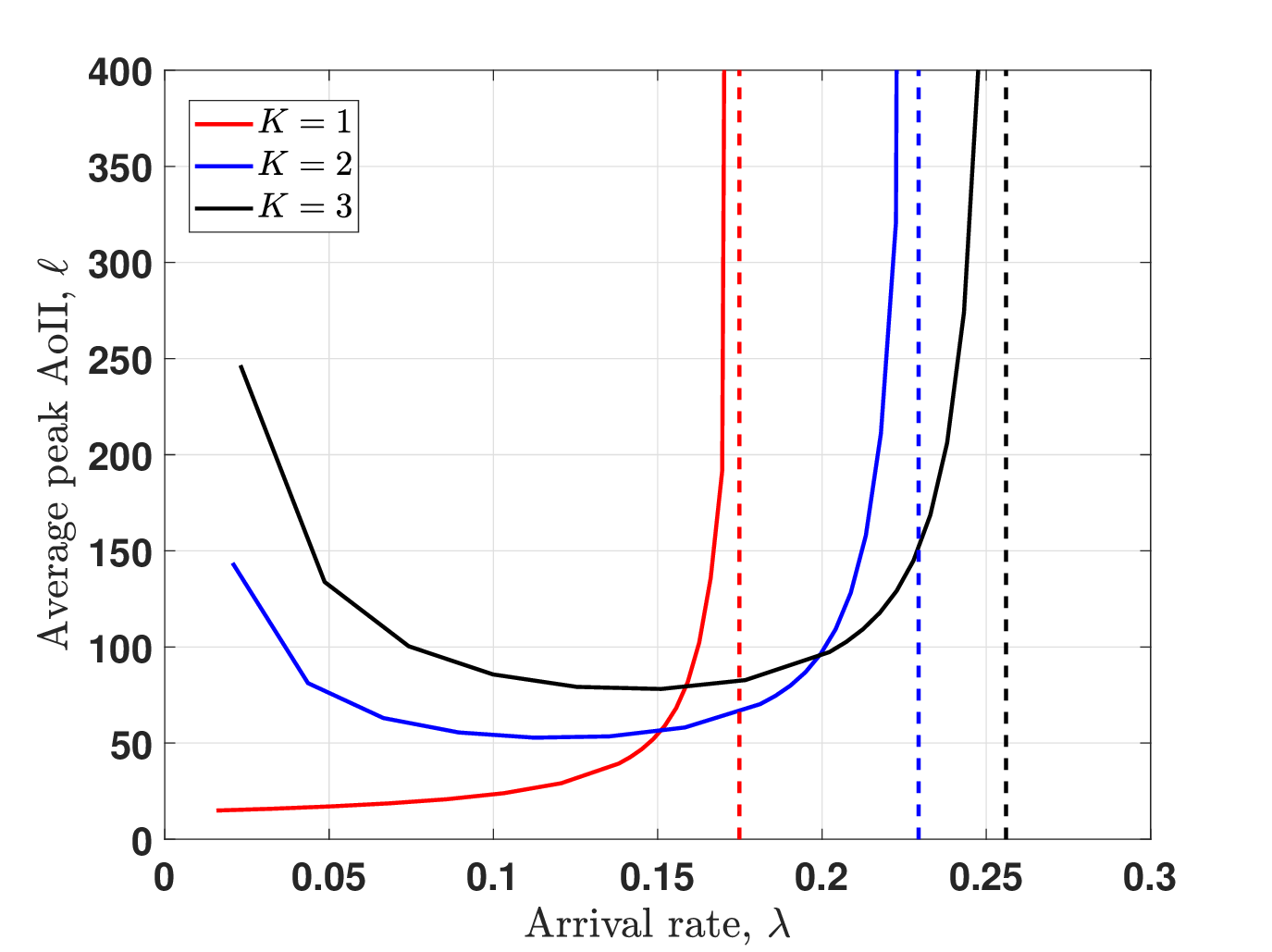}\\
		\caption{The average peak AoII $\ell$ versus the arrival rate $\lambda$ under FD scheme.}
		\label{fig_FD_delay_vs_arrival_rate}
	\end{figure}
	
	Next, we turn our attention to the XD scheme for average peak AoII oriented random access. We obtain that under the optimal allocation scheme, all bandwidth are allocated to either the reservation signal or the data packets in each timeslot. The average peak AoII $\ell$ versus the arrival rate $\lambda$ with different constraint of the transmission queue $N_2$ is shown in Fig. \ref{fig_ell_versus_arrival_rate}. The number of users is $M = 5$. When the arrival rate is low, the transmission queue can hardly contains much packets. Thus, the average peak AoII with different $N_2$ are nearly the same with each other and increase slowly with the arrival rate. As the arrival rate increases, more packets may arrive resulting in more collisions for the reservation. Thus, the average peak AoII $\ell$ increases rapidly with the arrival rate. In this case, larger transmission queue length constraint $N_2$ may allow more nodes to make reservations when the number of nodes in the reservation queue is small, which alleviates the collision hence significantly reducing the average peak AoII.
	
	The average peak AoII $\ell$ versus the maximum queue length $N_2$ with different arrival rate $\lambda$ is shown in Fig. \ref{fig_ell_versus_queue_length}. The arrival rate is given by $\lambda = 0.15$, $\lambda = 0.2$, $\lambda = 0.25$, and $\lambda = 0.3$. The number of users is $M = 10$. If the maximum queue length $N_2$ of the transmission queue is too small, some nodes have to wait in the reservation queue until the nodes in the transmission queue finishes transmission when they can make reservations. Thus, nodes can be congested in the reservation queue, which results in more collisions and higher peak AoII. When $N_2 = 0$, it is similar to the CSMA/CA scheme in which a node begins transmission right after it successfully sends a reservation signal. Since the average peak AoII decreases with $N_2$, the presented dynamic allocation scheme outperforms the CSMA/CA protocol when the number of nodes in the reservation queue can be obtained through collision level estimation. When the arrival rate is small, the average peak AoII decreases to the minimum with any $N_2$. When the arrival rate is large, the average peak AoII is further reduced with a larger $N_2$.
	
	\begin{figure}
		\centering
		\includegraphics[width=1\linewidth]{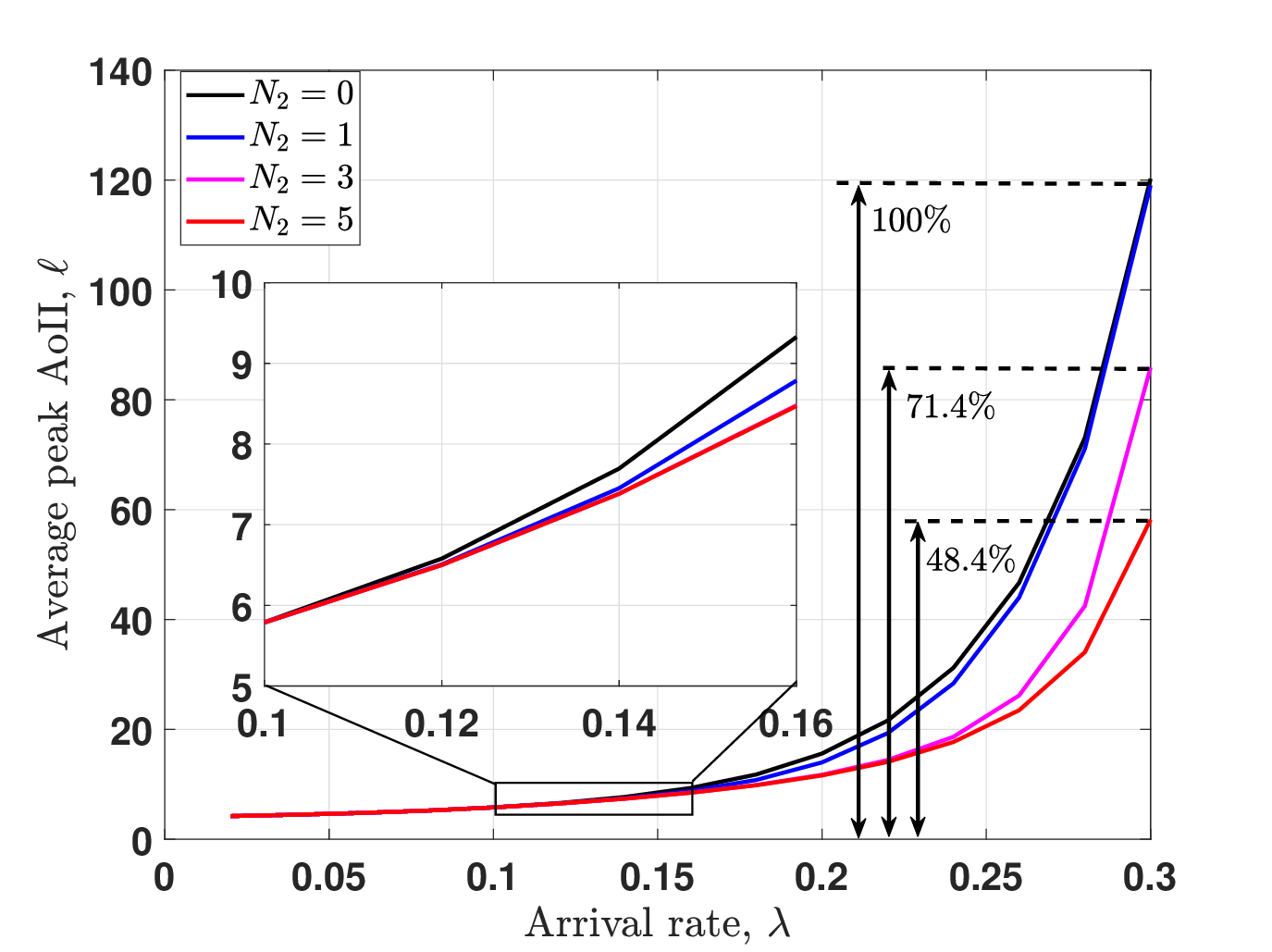}\\
		\caption{The average peak AoII $\ell$ versus the arrival rate $\lambda$ under the XD scheme.}
		\label{fig_ell_versus_arrival_rate}
	\end{figure}
	
	
	\begin{figure}
		\centering
		\includegraphics[width=1\linewidth]{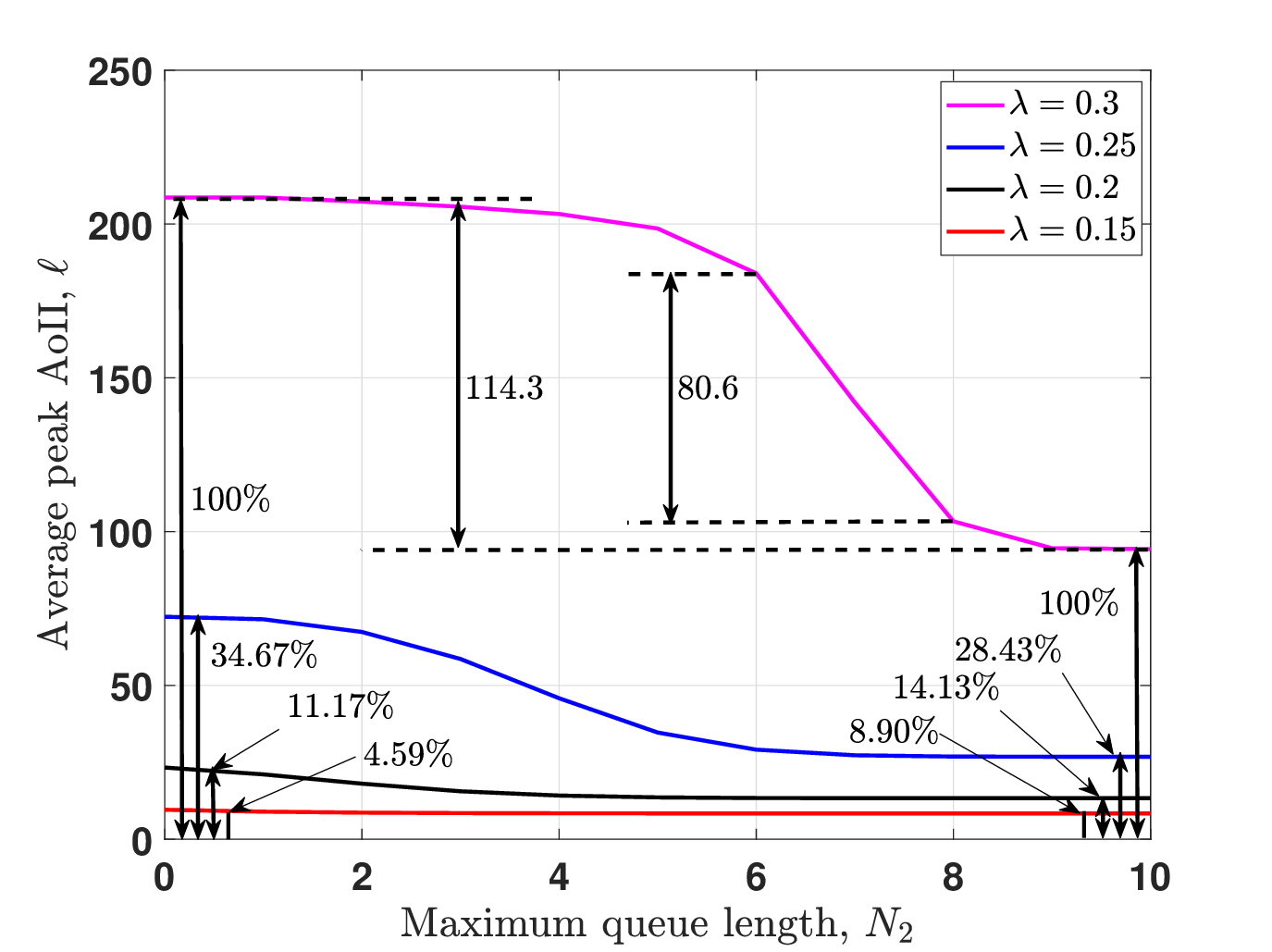}\\
		\caption{The average peak AoII $\ell$ versus the constraint of transmission queue length $N_2$ under the XD scheme.}
		\label{fig_ell_versus_queue_length}
	\end{figure}
	
	The average peak AoII $\ell$ versus the number of nodes $M$ in the network with different arrival rate is shown in Fig. \ref{fig_ell_versus_num_nodes}. The arrival rate is given by $\lambda = 0.1$, $\lambda = 0.15$, and $\lambda = 0.2$. We consider two typical cases of transmission queue constraints including $N_2 = 0$ and $N_2 = M$. It is shown that the average peak AoII does not increase with the number of nodes only when the total arrival rate is low. With a larger arrival rate, the average peak AoII with $N_2=0$ increases more rapidly, while the average peak AoII without constraint of the transmission queue increases slowly and approaches a constant as the number of nodes increases. This indicates that the dynamic allocation scheme is scalable with the number of nodes by significantly alleviating the collision among nodes. Thus, it holds the promise of addressing the access control for a massive number of nodes.

	\begin{figure}
		\centering
		\includegraphics[width=1\linewidth]{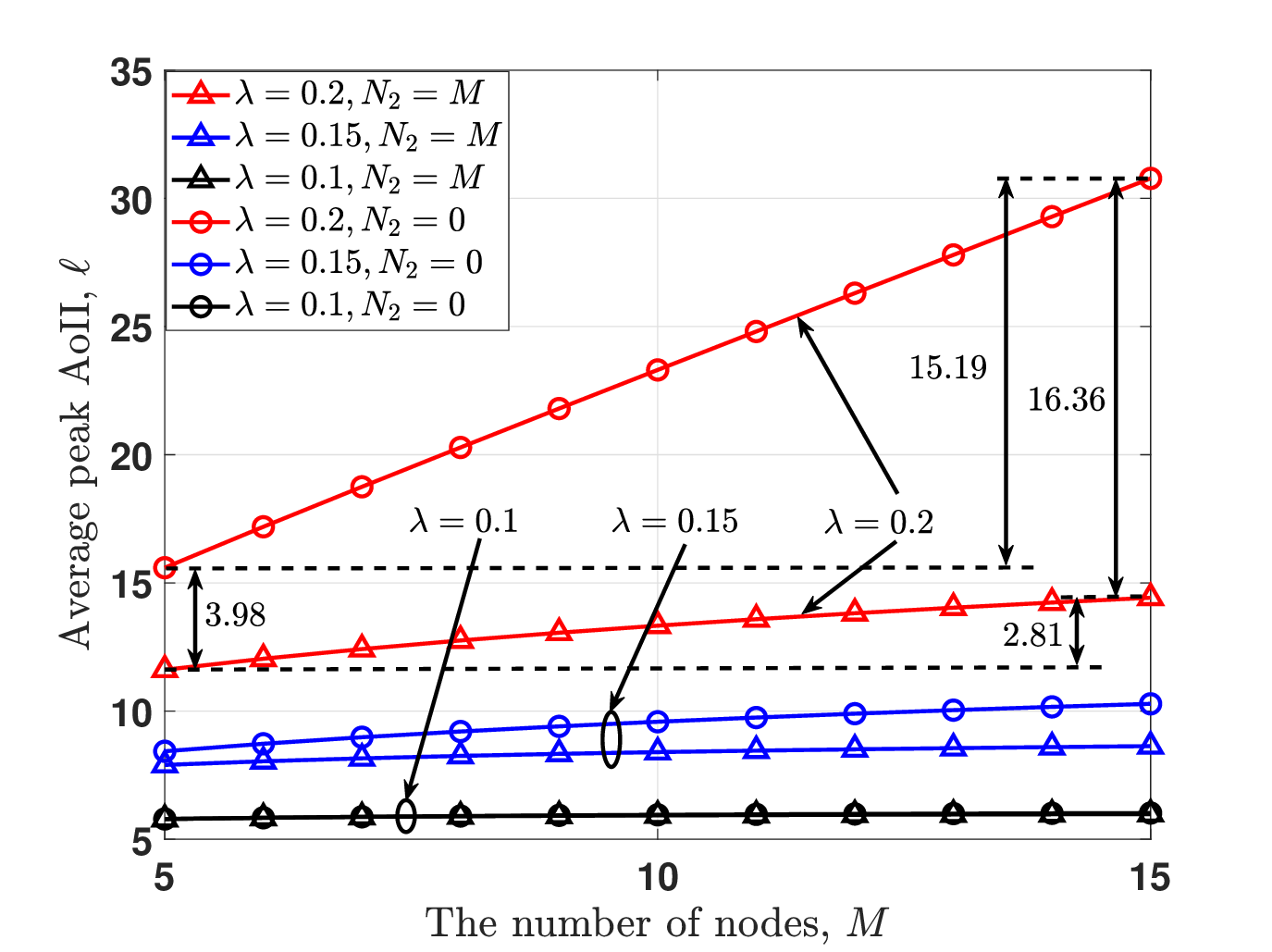}\\
		\caption{The average peak AoII $\ell$ versus the number of transmitter nodes $M$ under the XD scheme.}
		\label{fig_ell_versus_num_nodes}
	\end{figure}
	
	Next, we compare the TD, FD, and XD schemes in Fig. \ref{fig_comparison}. The number of nodes is $M = 5$. The XD scheme significantly outperforms the TD and FD schemes since the dynamic allocation scheme can avoid much waste of the bandwidth resources. Moreover, the XD scheme can achieve higher network throughput with finite average peak AoII. In The FD scheme, the access trigger stage with a higher waiting number $K$ improves the achievable network throughput with the cost of extra peak AoII. The adaptive reservation and transmission scheme with $K = 1$ and $N_{\mathrm{max}}=\infty$ significantly reduces the average peak AoII since the data packet does not wait for more packets before the reservation. Releasing the constraints of the transmission queue in the XD scheme can further reduce the average peak AoII since nodes can make reservations with less collision in the reservation queue while waiting more time in the transmission queue.
	
	
	\begin{figure}
		\centering
		\includegraphics[width=1\linewidth]{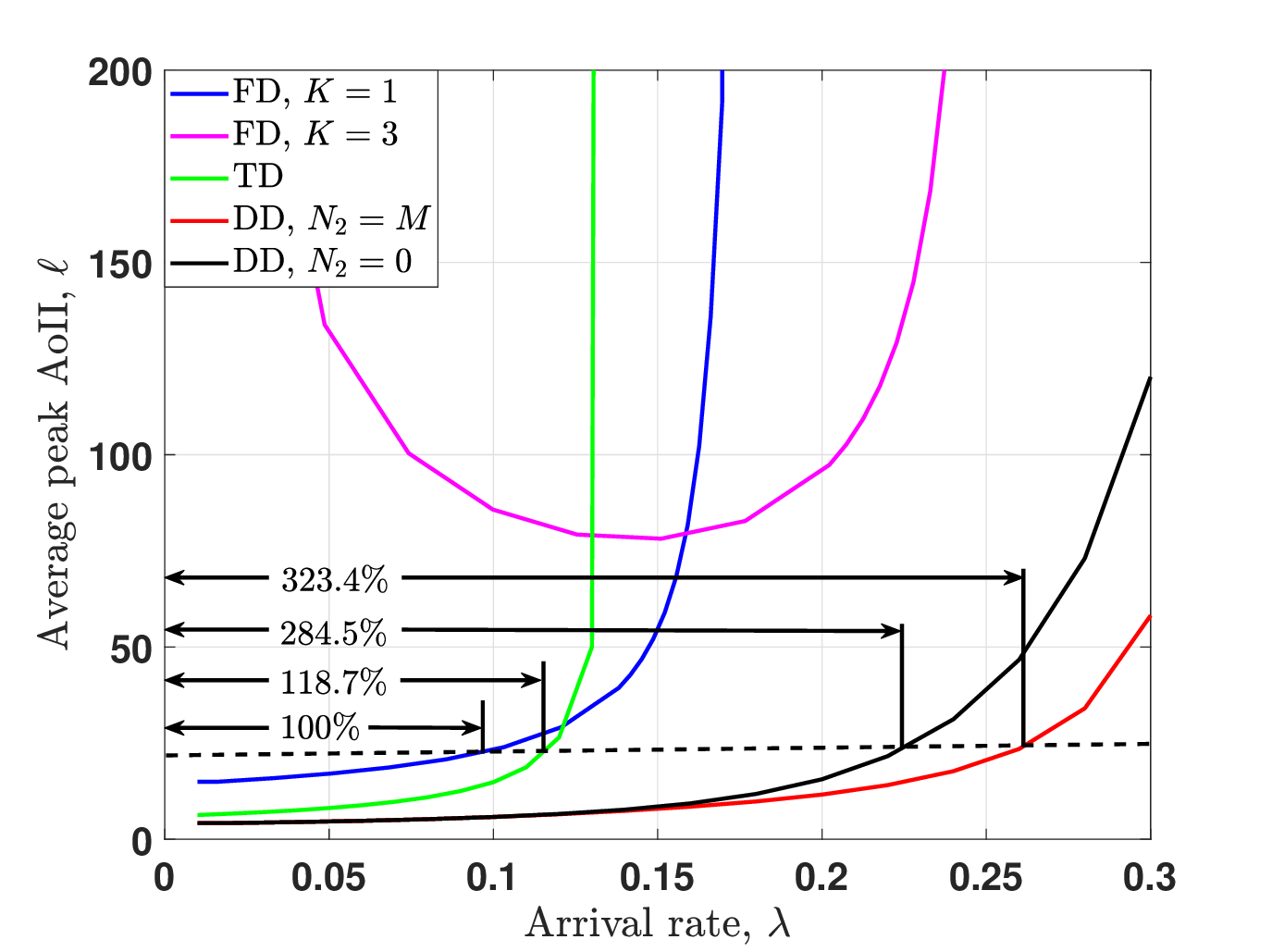}\\
		\caption{Comparison between the TD, FD, and XD schemes. }
		\label{fig_comparison}
	\end{figure}
	
	Finally, we demonstrate the effectiveness of the mean-field  approximations average AoII and peak AoII with a large amount of transmitter nodes. The average AoII $\bar{\Delta}$ versus the number of transmitter nodes $M$ is shown in Fig. \ref{fig_meanfield_aoii}. The total arrival rate of all transmitter nodes is $\lambda = 0.2$ and $\lambda=0.4$. It is shown that the average AoII increases with the number of transmitter nodes $M$. The mean-field approximation approaches the simulation results when the number of transmitter nodes becomes large enough. When the arrival rate is relatively large, the mean-field  approximation can be quite close even with $M=20$. In addition, the average peak AoII $\ell$ versus the number of transmitter nodes $M$ is shown in Fig. \ref{fig_meanfield_delay}. The total arrival rate is $\lambda=0.2$ and $\lambda = 0.25$. As the average peak AoII increases with the number of transmitter nodes, the accuracy of the mean-field  approximation increases. Moreover, the mean-field  approximation is more accurate with larger arrival rate. Thus, the mean-field approximations is also applicable for scenarios with heavy traffic or massive number of transmitter nodes. 
	
	\begin{figure}
		\centering
		\includegraphics[width=1\linewidth]{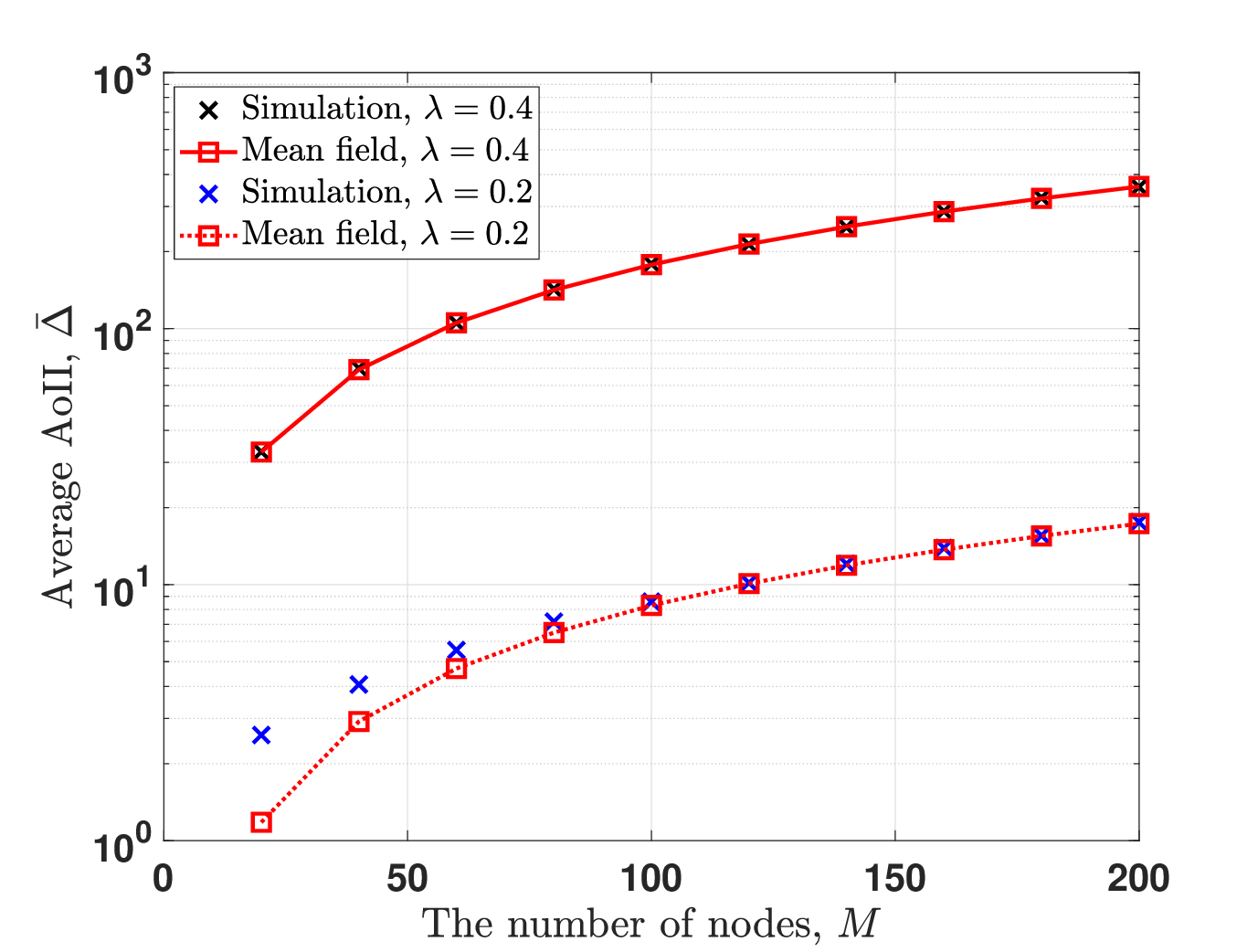}\\
		\caption{The average AoII $\bar{\Delta}$ versus the number of transmitter nodes $M$. }
		\label{fig_meanfield_aoii}
	\end{figure}
	\begin{figure}
		\centering
		\includegraphics[width=1\linewidth]{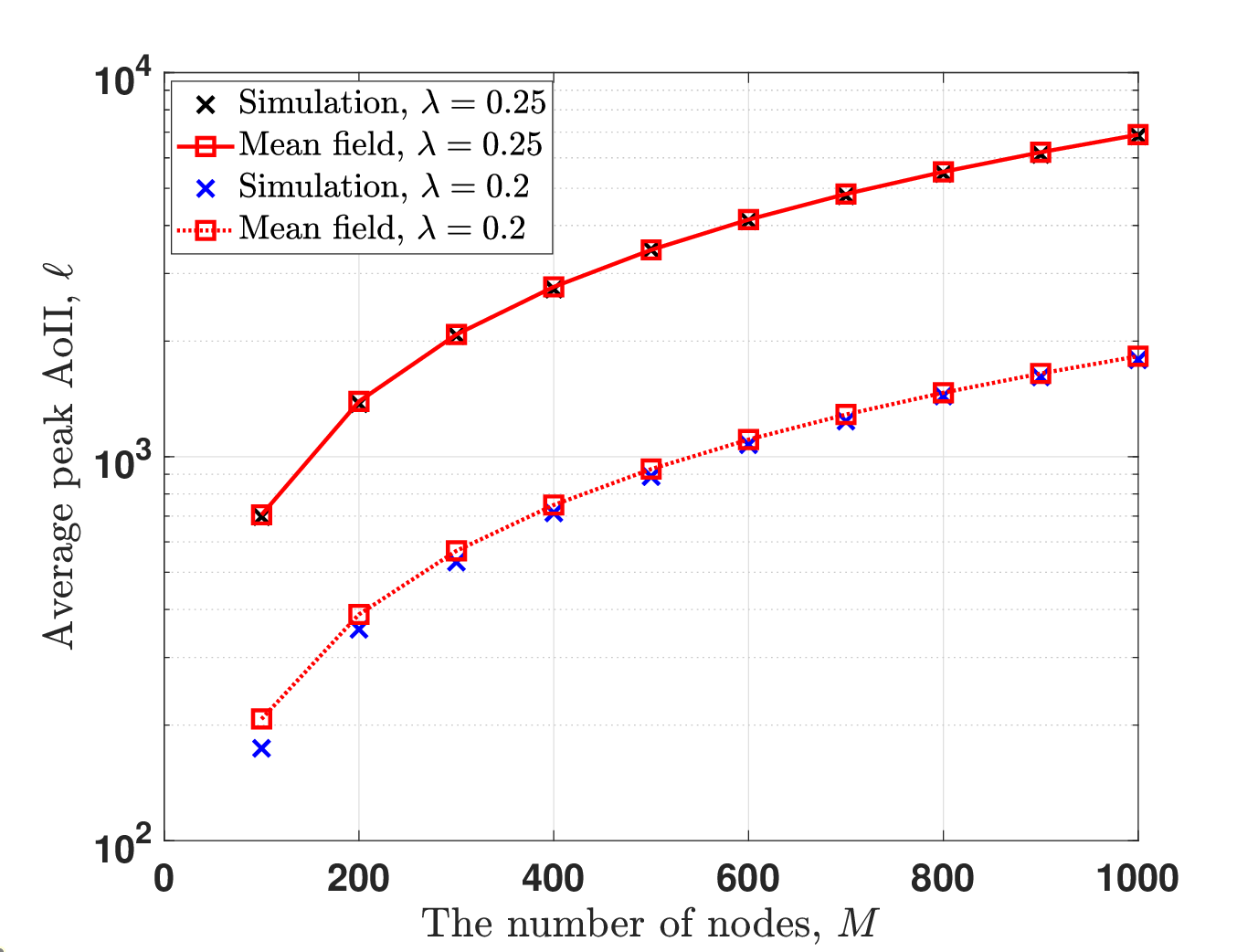}\\
		\caption{The average peak AoII $\ell$ versus the number of transmitter nodes $M$. }
		\label{fig_meanfield_delay}
	\end{figure}
	
	\section{Conclusion}

	In this paper, we have built a unified framework based on large models and mean-field approximations. Based on the unified framework, we analyze freshness-oriented multiple access, which is referred to as fresh multiple access, focusing on AoII and peak AoII scenarios. The average AoII and average peak AoII are analyzed for multiple access schemes characterized by Markov chain model. For AoII-oriented cases, we formulate a large Markov model for general multiple access schemes using our unified framework. We devise an algorithm to derive a sparse transition matrix for this model, enabling efficient computation of large, high-dimensional Markov models. In scenarios centered on average peak AoII, we reduce dimensionality through integral states of the system.  Three multiplexing schemes of reservation signals and data packets are studied for the analysis and optimization of average peak AoII. Moreover, to address the high dimensional model with massive users, mean-field approximations is presented to approximate the impact of all nodes by a simple statistical effect. Using the mean-field approximation, we analyze the average AoII and average peak AoII based on a small Markov model of a single node. Extensive simulations are presented to demonstrate the analysis for AoII and average peak AoII based on different Markov chain model formulation. However, it's important to note that we currently do not address the class of random access schemes that depend on the AoI or AoII. This area represents a significant aspect of our future research focus.
	
	
	\appendices
	\section{TD Multiplexing Scheme for Peak-AoII Oriented Scenario}
	\label{appendix_TD}
	
	
	For simplicity, we consider the maximum length of the reservation queue and the transmission queue are the same, given by $N = N_1 = N_2$. We present the Markov chain for the TD scheme with any collision resolution algorithm defined by $X_0$, $X_1$, and $Y_n$. Let $t_d \in \{1,2,\ldots, Z_1+cZ_2\}$ denote the index of the timeslot within a frame. Let $q_1$ and $q_2$ denote the length of the reservation queue and the transmission queue, respectively. Let $s$ denote the state of the collision resolution algorithm. The state of the tandem queue is given by $(t_d,q_2,q_1,s)$. The state transition of $s$ is defined by $X_0$, $X_1$, and $Y_n$. In the following, we present the state transition for states $(q_1,s)$, $(q_2,q_1,s)$, and finally for states $(t_d,q_2,q_1,s)$.
	
	The length of the reservation queue $q_1$ depends on the arrival of packets and the transmission of reservation signals. When a reservation signal is successfully transmitted in the current timeslot, the length of the reservation queue changes from $q_1$ to $q_1'$ with $q_1'-q_1+1$ new packets arrivals at nodes. The transition matrix for states $(q_1, s)$ in the current timeslot is defined as a $(N+1)r_0 \times (N+1)r_0$ matrix. In the first $Z_1$ timeslots of a frame, the transition matrix for state $s$ when $q_1 = i$ is given by $Y_i X_1$. Thus, the transition matrix for state $(q_1, s)$ in the first $K$ timeslots of a frame is given by
	\begin{align}
		B = \left[\begin{array}{ccccc}
			0 & 0 & \cdots & 0 & 0 \\
			B_{10} & B_{11} & \cdots & B_{1(N-1)} & B_{1N} \\
			0 & B_{21} & \cdots & B_{2(N-1)} & B_{2N} \\
			\vdots & \ddots & \ddots & \vdots & \vdots \\
			0 & \cdots & 0 & B_{N(N-1)} & B_{NN}
		\end{array}\right],
	\end{align}
	in which $B_{ij}$ is an $r_0 \times r_0$ matrix characterizing the transition of state $s$ when the state $q_1$ transits from $q_1 = i$ to $q_1' = j$. For $j = 0, \ldots, N-1$, the matrix $B_{ij}$ is given by
	\begin{align}
		B_{ij} = a_{j-i+1} Y_i X_1.
	\end{align}
	If more than $N$ packets arrive at nodes, then only $N$ packets can be left in the reservation queue due to the constraint while other packets should be dropped. Thus for $j = N$, the matrix $B_{iN}$ is given by
	\begin{align}
		B_{iN} = \sum_{k=N-i+1}^{\infty} a_k Y_i X_1.
	\end{align}
	
	Similarly, we present the state transition for stats $(q_1, s)$ when no reservation signal is successfully transmitted in the current timeslot. The transition matrix for states $(q_1, s)$ in this case is given by
	\begin{align}
		A_0 = \left[\begin{array}{cccc}
			A_{00} & A_{01} & \cdots & A_{0N} \\
			0      & A_{11} & \cdots & A_{1N} \\
			\vdots & \ddots & \ddots & \vdots \\
			0      & \cdots & 0      & A_{NN}
		\end{array}\right],
	\end{align}
	in which $A_{ij}$ is an $r_0 \times r_0$ matrix characterizing the transition of state $s$ when the state $q_1$ transits from $q_1 = i$ to $q_1' = j$. The matrix $A_{ij}$ is given by
	\begin{align}
		A_{0,ij} = \begin{cases}
			a_{j-i} Y_i X_0, & 0 \leq j < N, \\
			\sum_{k=j-i}^{\infty}a_k Y_i X_0, & j = N.
		\end{cases}
	\end{align}
	
	In the last $cZ_2$ timeslots of a frame, the state of the collision resolution $s$ does not change since no bandwidth is allocated to reservation signals. Thus, the transition matrix for state $s$ is given by an identity matrix $I_{r_0}$ whose diagonal elements are equal to one while other elements are equal to zero. We denote the transition matrix for states $(q_1, s)$ in the last $cX$ timeslots of a frame by $A_1$ with elements given by
	\begin{align}
		A_{1,ij}' = \begin{cases}
			a_{j-i} I_{r_0}, & 0 \leq j < N, \\
			\sum_{k=j-i}^{\infty}a_k I_{r_0}, & j = N.
		\end{cases}
	\end{align}
	
	The length of the transmission queue $q_2$ depends on successful transmission of reservation signals and the transmission of data packets. The transition matrix for states $(q_2, q_1, s)$ is defined as a $(N+1)^2r_0 \times (N+1)^2r_0$ matrix. Specifically, in the first $K$ timeslots of a frame, the state $q_2$ may transit from $q_2 = i$ to $q_2' = i+1$ if a reservation signal is successfully transmitted or to $q_2' = i$ otherwise. Thus, the transition matrix for states $(q_2, q_1, s)$ in the first $Z_1$ timeslots of a frame is given by
	\begin{align}
		C_1 = \left[\begin{array}{ccccc}
			A_0 & B & 0 & \cdots & 0 \\
			0 & A_0 & B & \ddots & \vdots \\
			\vdots & \ddots & \ddots & \ddots & 0 \\
			0 & \cdots & 0 & A_0 & B \\
			0 & \cdots & 0 & 0 & A_1
		\end{array}\right].
	\end{align}
	Note that no more reservation signal is allowed to be transmitted when the length of the transmission queue is $q_2 = N$ due to the constraint of queue length. Thus, when $q_2 = N$, the transition matrix for states $(q_1, s)$ is $A_1$. For simplicity, we define four auxiliary matrices given by
	\begin{align}
		L_1 = \left[\begin{array}{cc}
			I_N & \bm{0}_N \\
			\bm{0}_N^{\mathrm{T}} & 0
		\end{array}\right],
		L_2 = \left[\begin{array}{cc}
			\bm{0}_N & I_N \\
			0 & \bm{0}_N^{\mathrm{T}}
		\end{array}\right], \nonumber\\
		L_3 = \left[\begin{array}{cc}
			\bm{0}_{N \times N} & \bm{0}_{N} \\
			\bm{0}_{N}^{\mathrm{T}} & 1
		\end{array}\right],
		L_4 = \left[\begin{array}{cc}
			1 & \bm{0}_N^{\mathrm{T}} \\
			I_N & \bm{0}_N \\
		\end{array}\right],
	\end{align}
	in which $\bm{0}_N$ is a $N \times 1$ column vector with all elements equal to zero, and $\bm{0}_{N \times N}$ is a $N \times N$ matrix with all elements equal to zero. Using Kronecker product, the matrix $C_1$ is represented by $A_0$, $A_1$, and $B$, given by
	\begin{align}
		C_1 = L_1 \otimes A_0 + L_2 \otimes B + L_3 \otimes A_1.
	\end{align}
	Moreover, in the last $cZ_2$ timeslots of a frame, the length of transmission queue $q_2$ decreases by one for every $Z_2$ timeslots after a packet is transmitted. Thus, in timeslots $t_d=Z_1+iZ_2$ of a frame for $i = 1, \ldots, c$, the transition matrix is given by
	\begin{align}
		C_2 = I_{N+1} \otimes A_1.
	\end{align}
	In timeslots $t_d \neq Z_1+iZ_2$ of a frame, the transition matrix is given by
	\begin{align}
		C_3 = L_4 \otimes A_1.
	\end{align}
	
	Finally, we present the transition matrix for states $(t_d, q_2, q_1, s)$. The index of timeslot $t_d$ increases by one for each timeslot. Thus, the transition matrix for states $(t_d, q_2, q_1, s)$ is given by
	\begin{align}
		D = \left[\begin{array}{ccccc}
			0 & D_{1} & 0 & \cdots & 0 \\
			0 & 0 & D_{2} & \ddots & \vdots \\
			\vdots & \vdots & \ddots & \ddots & 0 \\
			0 & 0 & \cdots & 0 & D_{K+cT-1} \\
			D_{K+cT} & 0 & \cdots & 0 & 0
		\end{array}\right],
	\end{align}
	in which $D_i$ is the transition matrix of states $(q_2, q_1, s)$ in the $i$th timeslot of a frame for $i = 1, \ldots, Z_1+cZ_2$. Therefore, The transition matrix $D$ is of order $r_2 = (Z_1+cZ_2)(N+1)^2r_0$ with elements given by
	\begin{align}
		D_i = \begin{cases}
			C_1, & 1 \leq i \leq Z_1, \\
			C_2, & i = Z_1+jZ_2~\mathrm{for}~1\leq j \leq c, \\
			C_3, & Z_1\!+\!(j\!-\!1)Z_2 < i < Z_1\!+\!jZ_2~\mathrm{for}~ 1 \leq j \leq c.
		\end{cases}
	\end{align}
	
	Based on the Markov chain for the TD scheme, we can obtain the steady-state probabilities of all states. Let a row vector $\bm{\pi}$ denote the steady-state probabilities of all states. The steady-state probabilities satisfy that
	\begin{align}
		\label{steady_eq_1}
		\begin{cases}
			\bm{\pi} D = \bm{\pi}, \\
			\bm{\pi} \bm{1}_{r_2} = 1,
		\end{cases}
	\end{align}
	in which $\bm{1}_{r_2}$ represents a $r_2 \times 1$ columns vector with all elements equal to one. From Eq. (\ref{steady_eq_1}) we can obtain that the steady-state probability $\bm{\pi}$ satisfy that $\bm{\pi}(D - I_{r_2} + \bm{1}_{r_2 \times r_2}) = \bm{1}_{r_2}^{\mathrm{T}}$, in which $\bm{1}_{r_2 \times r_2}$ represents an $r_2 \times r_2$ matrix with all elements equal to one. Then we can obtain the steady-state probabilities $\bm{\pi}$ by
	\begin{align}
		\bm{\pi} = \bm{1}_{r_2}^{\mathrm{T}} \left(D - I_{r_2} + \bm{1}_{r_2\times r_2}\right)^{-1}. 
	\end{align}
	For high-dimensional Markov chains with a large transition matrix, the matrix can be quite sparse according to the definition. Thus, advanced methods for solving large and sparse Markov chains can be adopted to significantly improve the computational efficiency \cite{large_sparse_chain_1, large_sparse_chain_2}. 
	
	The steady-state probability for state $(t_d, q_2, q_1, s)$ is denoted by $\pi_{t_d,q_2,q_1,s}$. Using the steady-state probabilities, we can obtain the average length of the tandem queue given by
	\begin{align}
		\bar{L} = \sum_{q_1 = 0}^{N} \sum_{q_2 = 0}^{N} (q_1 + q_2) \sum_{t_d=1}^{Z_1+cZ_2} \sum_{s \in \mathcal{S}_{\mathrm{R}}} \pi_{t_d,q_2,q_1,s}.
	\end{align}
	The average peak AoII is obtained according to Eq. (\ref{eq_average_delay}). However, the actual arrival rate to the reservation queue may not be equal to the packet arrival rate $\lambda$ due to the constraint of queue length $N$. We next derive the actual arrival rate to the reservation queue $\lambda'$. Denote the probability by $\pi_i'$ if there are $i$ packets in the reservation queue after the transmission of reservation signal in a timeslot. Before the transmission of reservation signal in that timeslot, the length of the reservation queue is $i+1$ or $i$. For given $t_d$, $q_2$, and $q_1$, denote the steady-state probabilities for $s \in \mathcal{S}_{\mathrm{R}}$ by a $1 \times r_0$ row vector $\bm{\pi}_{t_d,q_2,q_1} = \left(\pi_{t_d,q_2,q_1,s_1}, \ldots, \pi_{t_d,q_2,q_1,s_{r_0}}\right)$. The probability $\pi_i'$ is given by
	\begin{align}
		\pi_i' =& \sum_{t_d=1}^{Z_1} \left[\sum_{q_2=0}^{N-1} \left(\bm{\pi}_{t_d,q_2,i} X_0 + \bm{\pi}_{t_d,q_2,i+1}  X_1 \right)\bm{1}_{r_0} + \bm{\pi}_{t_d,N,i}\right] \nonumber\\
		&+ \sum_{t_d=Z_1+1}^{Z_1+cZ_2} \sum_{q_2=0}^{N}\bm{\pi}_{t_d,q_2,i}.
	\end{align}
	When there are $i$ packets in the reservation queue, at most $N-i$ packets can arrive at the reservation queue at the end of the timeslot. Thus, the actual arrival rate $\lambda'$ is given by
	\begin{align}
		\lambda' = \sum_{i=0}^{N}\left(\sum_{j=0}^{N-i-1} j a_j \pi_i' + (N-i)\sum_{j=N-i}^{\infty}a_j \pi_i' \right).
	\end{align}
	According to Eq. (\ref{eq_average_delay}), the average peak AoII is obtained by $\ell = \frac{\bar{L}}{\lambda'}$. Moreover, the packet loss rate is given by $1 - \frac{\lambda'}{\lambda}$.

	\begin{spacing}{0.91}
	
\end{spacing}
	
	\begin{IEEEbiography}[{\includegraphics[width=1in,height=1.25in,clip,keepaspectratio]{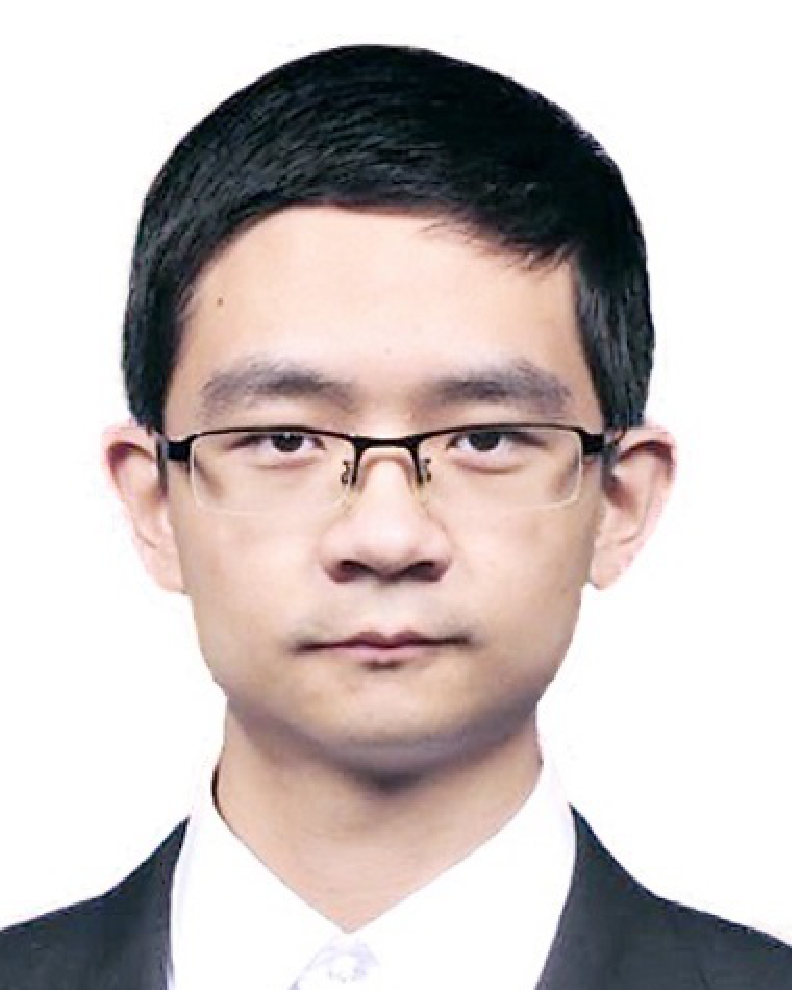}}]{Haiming Hui}
		received the B.S. degree from Tsinghua University, Beijing, China, in 2018, where he is currently pursuing the Ph.D. degree with the Department of Electronic Engineering. His research interest includes age of information and the proactive pushing techniques in wireless networks.
	\end{IEEEbiography}
	
	\vfill
	
	\begin{IEEEbiography}[{\includegraphics[width=1in,height=1.25in,clip,keepaspectratio]{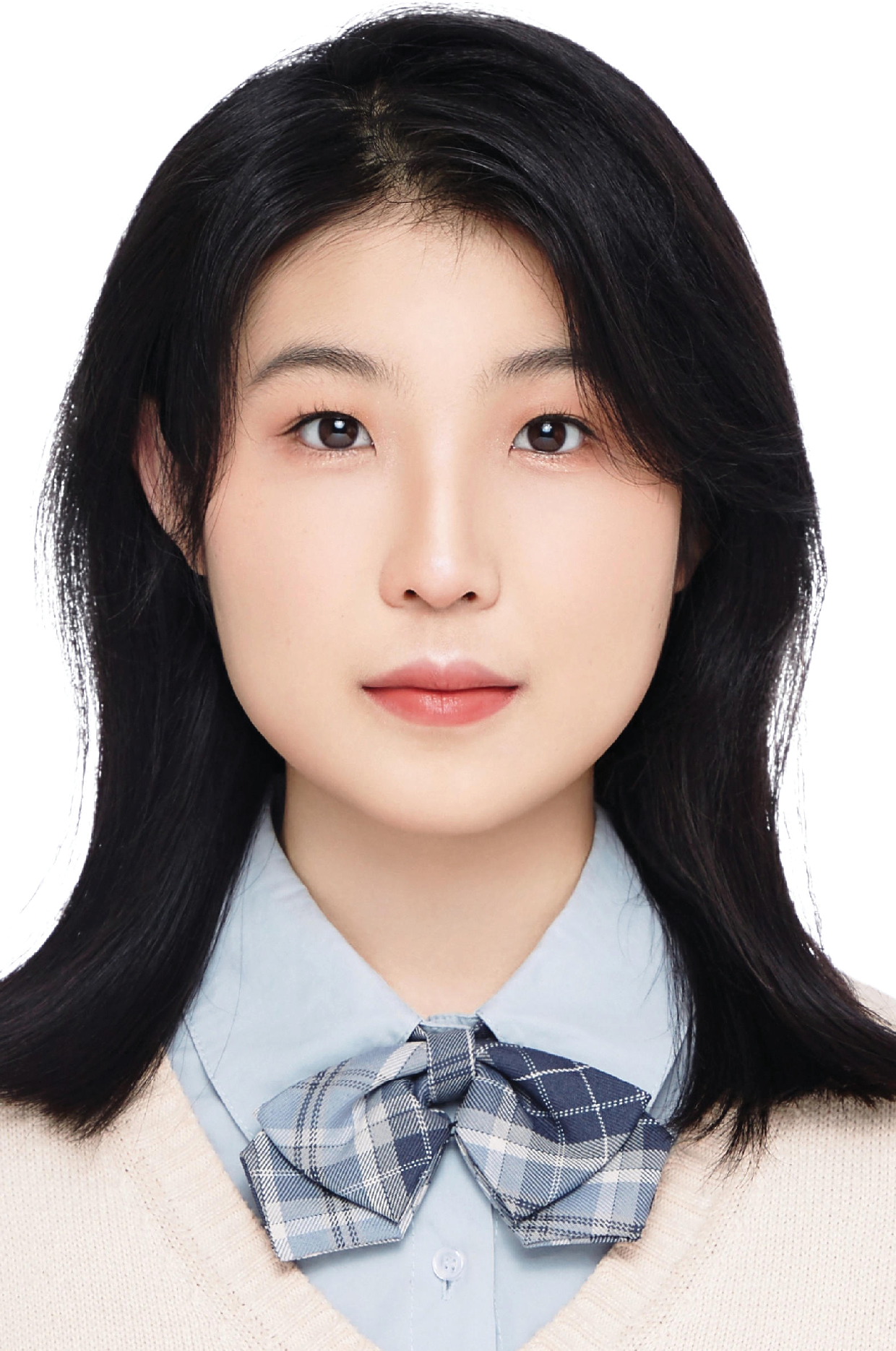}}]{Shuqi Wei}
		received the B.S. degree from Tsinghua University, Beijing, China, in 2023, where she is pursuing the Ph.D. degree with the Department of Electronic Engineering. Her research interests are in the areas of real-time communications and task-oriented communications.
	\end{IEEEbiography}
	\vfill
	
	\begin{IEEEbiography}[{\includegraphics[width=1in,height=1.25in,clip,keepaspectratio]{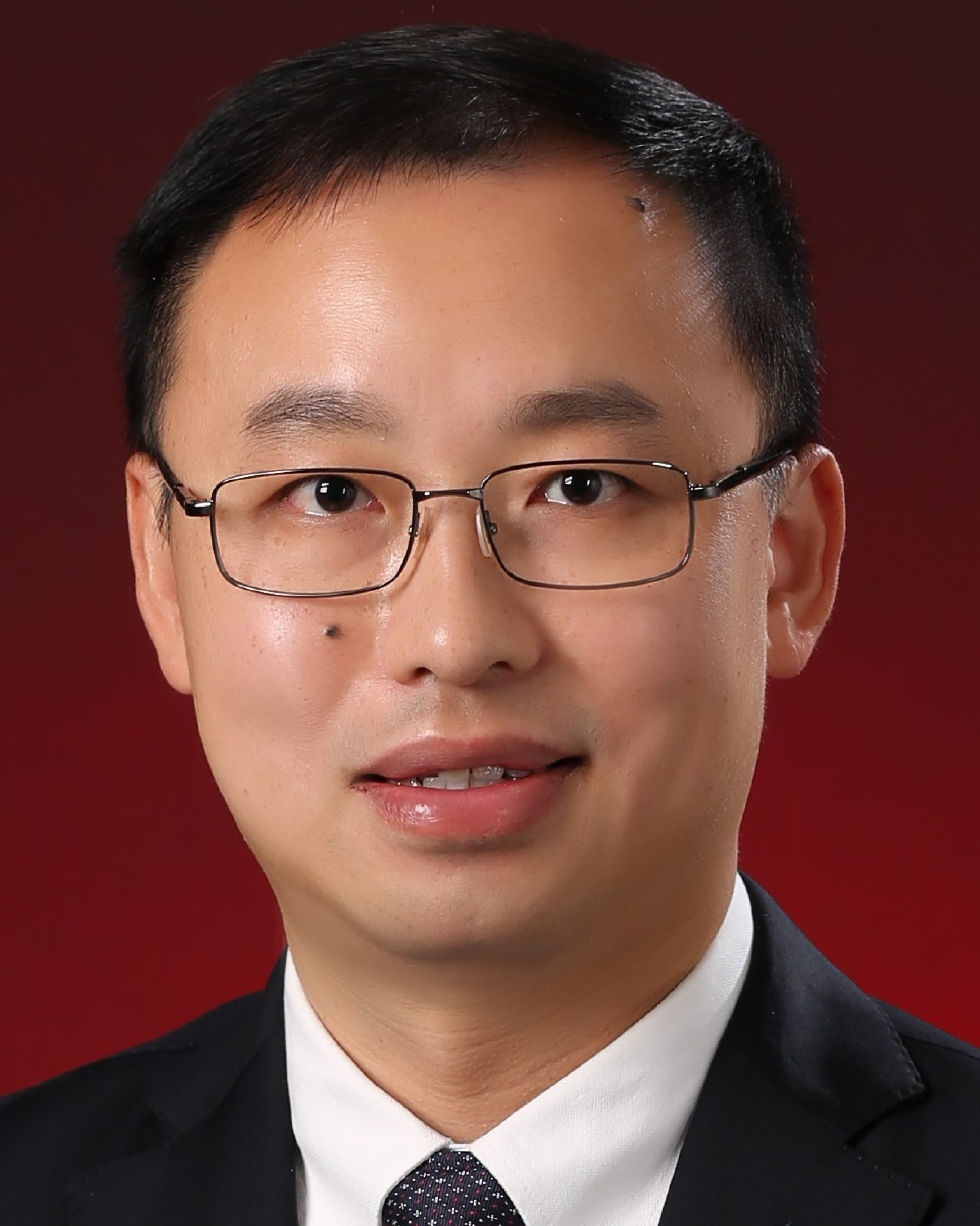}}]{Wei Chen}
		(Senior Member, IEEE) received the B.S. and Ph.D. degrees (Hons.) from Tsinghua University in 2002 and 2007, respectively. Since 2007, he has been a Faculty Member with Tsinghua University, where he is currently a Tenured Full Professor and a University Council Member. During 2016-2021, he was the Director of the Degree Office of Tsinghua University. During 2014-2016, he was the Deputy Head of the Department of Electronic Engineering in Tsinghua University. From 2005 to 2007, he was a Visiting Ph.D. Student with the Hong Kong University of Science and Technology. He visited the University of Southampton in 2010, Telecom Paris Tech in 2014, and Princeton University, Princeton, NJ, USA, in 2016. His research interests are in the areas of real-time communications, task-oriented communications, and co-design of communications, control, and optimizations.
		
		He is a Cheung Kong Young Scholar and a member of the National Program for Special Support of Eminent Professionals, also known as 10,000 talent program. He received the IEEE Marconi Prize Paper Award in 2009 and the IEEE Comsoc Asia Pacific Board Best Young Researcher Award in 2011. He is a recipient of the National May 1st Labor Medal and the China Youth May 4th Medal. He has also been supported by the National 973 Youth Project, the NSFC Excellent Young Investigator Project, the New Century Talent Program of the Ministry of Education, and the Beijing Nova Program. He serves as an Editor for IEEE TRANSACTIONS ON WIRELESS COMMUNICATIONS. He also serves as a standing committee member of All-China Youth Federation and the secretary-general of its education board. He has served as Editors for IEEE TRANSACTIONS ON COMMUNICATIONS and IEEE WIRELESS COMMUNICATIONS LETTERS, a TPC Co-Chair for IEEE VTC-Spring in 2011 and a Symposium Co-Chair for IEEE ICC and Globecom.

	\end{IEEEbiography}
	\vfill
	
\end{document}